\documentclass[11pt,a4paper,english]{article}
\usepackage[T1]{fontenc}
\usepackage[utf8]{inputenc}
\usepackage{authblk}
 \title{A Poincar\'e Covariant Noncommutative Spacetime?}
\author[1]{Albert Much\footnote{amuch@matmor.unam.mx} }
\author[2]{Jos\'e David  Vergara \footnote{vergara@nucleares.unam.mx}}
\affil[1]{Centro de Ciencias Matem\'aticas\\
	UNAM\\
  Morelia, Michoac\'an, Mexico}  
 \affil[2]{Instituto de Ciencias Nucleares, UNAM,  D.F. , M\'exico }

\usepackage{float}            % bietet Option [H] fÃƒÂ¼r bombenfestes Verankern
\usepackage[T1]{fontenc}   
\usepackage[cmex10]{amsmath}  % fÃƒÂ¼r erw. Formeloptionen, Option [] zur Vermeidung von Type3-Fonts
\usepackage{amstext}          % fÃƒÂ¼r Klartext via \text{} in Formeln
\usepackage{mathrsfs}
\usepackage{amsfonts}         % fÃƒÂ¼r komplexere Formeln (Mengensymbole ...)
\usepackage{amssymb}          % fÃƒÂ¼r komplexere Formeln (Mengensymbole ...)
\usepackage{amsthm}    
\usepackage{setspace}
\usepackage{bm}               % bold math, fÃƒÂ¼r \bm{}
\usepackage{enumerate}        % verbessert AufzÃƒÂ¤hlungen
\usepackage[bottom]{footmisc} % Fussnoten am Seitenende
\usepackage{array}            % fÃƒÂ¼r Tabellen: bindet tabular-Umgebung ein
\usepackage{textcomp}         % fÃƒÂ¼r \textdegree , \textcelsius , macht aber manchmal Probleme (!!)
\usepackage{pdfpages}         % fÃƒÂ¼r die Einbindung kompletter pdf-*Seiten*
\usepackage{parskip}          % zw. AbsÃƒÂ¤tzen: eine knappe Leerzeile statt hÃƒÂ¤ngender EinzÃƒÂ¼ge
\usepackage[right]{eurosym}   % Eurosymbol
\usepackage{xcolor}           % farbiger Text
\usepackage[square,numbers]{natbib}	  % FÃƒÂ¼r \setlength{\bibsep}{3mm}; square macht eckige

\usepackage{amsmath}  
\usepackage[colorlinks,pdfpagelabels,pdfstartview = FitH,bookmarksopen = true,bookmarksnumbered =
true,linkcolor = black,plainpages = false,hypertexnames = false,citecolor = black]{hyperref}

\sloppy                       % groÃƒÅ¸zÃƒÂ¼giger Zeilenumbruch 
% -> keine rechts rausragenden Zeilen mehr

%%%%%%%%%%%%%%%%%%%%%%%%%%%%%%%%%%%%%%%%%%%%%%%%%%%%%%%%%%%%%%%%%%%%%%%%%%%%%%
%
% GrÃƒÂ¶ÃƒÅ¸enanpassungen
%
\setlength{\unitlength}{1cm}
\setlength{\oddsidemargin}{0cm}
\setlength{\evensidemargin}{0cm}
\setlength{\textwidth}{15cm}
\setlength{\topmargin}{-1cm}
\setlength{\textheight}{23cm}
\columnsep 0.5cm
%
%%%%%%%%%%%%%%%%%%%%%%%%%%%%%%%%%%%%%%%%%%%%%%%%%%%%%%%%%%%%%%%%%%%%%%%%%%%%%%

\newtheorem{theorem}{\textsc{Theorem}}[section]
\newtheorem{lemma}{\textsc{Lemma}}[section]
\newtheorem{proposition}{\textsc{Proposition}}[section]

\theoremstyle{definition}
\newtheorem{definition}{\textsc{Definition}}[section]

\newtheorem{convention}{Conventions}[section]

\theoremstyle{remark}
\newtheorem{remark}{Remark}[section]

\newcommand{\R}{\mathbb{R}}

\newcommand{\bone}{\mathbf{1}}

%%%%%%%%%%%%%%%%%%%%%%%%%%%%%%%%%%%%%%%%%%%%%%%%%%%%%%%%%%%%%%%%%%%%%%%%%%%%%%%
% Ueberschrift und zeug...

%%%%%%%%%%%%%%%%%%%%%%%%%%%%%%%%%%%%%%%%%%%%%%%%%%%%%%%%%%%%%%%%%%%%%%%%%%%%%%%
\usepackage{fancyhdr}
\pagestyle{fancy}
\fancyhf{}
\fancyhead[L]{\rightmark}
\fancyhead[R]{\thepage}

\numberwithin{equation}{section} 
\begin{document}
	\maketitle
	\abstract{We interpret, in the realm of relativistic quantum field theory, the tangential operator given by Coleman, Mandula \cite{CM} (see also \cite{Muc8}) as an appropriate coordinate operator. The investigation shows that the operator generates a Snyder-like noncommutative spacetime with a  minimal length that is  given by the mass. By using this operator to define a noncommutative spacetime, we  obtain a Poincar\'e invariant noncommutative spacetime and in addition solve the soccer-ball problem. 	 Moreover, from recent progress in deformation theory we extract the idea how to obtain, in a physical and mathematical well-defined manner, an emerging noncommutative spacetime. This is done by a strict deformation quantization known as Rieffel deformation (or warped convolutions).  The result is a noncommutative spacetime combining a Snyder and a Moyal-Weyl type of noncommutativity that in addition behaves covariant under transformations of the \textbf{whole} Poincar\'e group.} 
	%%%%%%%%%%%%%%%%%%%%%%
	% Inhaltsverzeichnis %
	\tableofcontents
	%%%%%%%%%%%%%%%%%%%%%%

\begin{convention}
	We use $d=n+1$, for $n\in\mathbb{N}$ and the Greek letters are split into  $\mu, \,\nu=0,\dots,n$. Moreover, we use Latin letters for the spatial components which run from $1,\dots,n$ and we choose the following convention for the Minkowski scalar product of $d$-dimensional vectors, $a\cdot b=a_0b^0+a_kb^k=a_0b^0- \vec{a}\cdot\vec{b}$. Furthermore, we use the common symbol $\mathscr{S}(\mathbb{R}^{n,d})$ for the Schwartz-space and $\mathcal{H}$ for a Hilbert space.
\end{convention} 
\section{Introduction}  
Pick up an arbitrary event $\mathscr{A}$, whose perpendicular displacement from the center-of-mass world-line is $X_{\mu}$,
then the angular momentum, about $\mathscr{A}$ is 
\begin{equation*}
J_{\mu\nu}=P_{\mu}X_{\nu}-P_{\nu}X_{\mu}.
\end{equation*}
For the moment we neglect the intrinsic angular momentum. Let the observable $P^{\mu}$ denote the relativistic four-momentum. By knowing the angular momentum of $\mathscr{A}$  and the four-momentum,
one can calculate the vector from $\mathscr{A}$ to the center-of-mass world-line, \cite[Chapter II.5.11, Box E]{MG}
\begin{equation*}
X_{\mu}=\frac{J_{\mu\nu}P^{\nu}}{M^2}
\end{equation*}where here	$M$ denotes the mass.  	It is interesting to note that the author in  \cite{PR48} found a similar operator while searching  for a relativistic generalization of the Newtonian center-of-mass world-line. The operator transforms covariantly under Lorentz transformations and it is noncommutative    among its own components. \newline\newline
Nowadays, investigations prove the physical and/or mathematical necessity for 
spacetimes that carry a noncommutative structure, \cite{DFR},  \cite{AA}, \cite{Muc1} and \cite{Muc9}. One of the most common studied models is the so called Moyal-Weyl spacetime, generated by self-adjoint coordinate operators $\hat{x}$ that fulfill,
$$[\hat{x}_{\mu},\hat{x}_{\nu}]=i\Theta_{\mu\nu},$$
where $\Theta$ is a constant skew-symmetric matrix. Mathematically, it can be argued that this spacetime is a generalization of the commutative coordinate structure such as the quantum-mechanical symplectic structure is between the momentum and the coordinate.  Physically, these commutation relations are consequences of the fact  that localization with high accuracy at Planck scale causes a gravitational collapse, \cite{DFR}. However, one major setback of such spacetimes is the explicit breaking of Lorentz-covariance. \newline \newline In our opinion, this is due to the fact that algebras are imposed a priori without the existence of physically motivated operators (observables) that generate these spacetimes.  Hence,  in this letter we take the relativistic definition of the center-of-mass world-line as an appropriate coordinate operator and  investigate its consequences. First we define it   formally hermitian,  
\begin{equation}\label{cop}
X_{\mu}=\frac{1}{2{M^2}}({J_{\mu\nu}P^{\nu}}+{P^{\nu}J_{\mu\nu}}),
\end{equation} 
where $M^2$ denotes  the mass operator that is given by  
\begin{equation}\label{em}
M^2=P_{0}P^{0}+P_{i}P^{i}=P_{\mu}P^{\mu}.
\end{equation} The operator which we call the center-of-mass world-line is essentially what is known in literature   as the  tangential derivative or the Coleman-Mandula operator. It maintains the  mass shell condition that is imposed by the 	Klein-Gordon equation, see \cite[Equation (13)]{CM} and \cite[Equation (5)]{Muc8} and it is given (in momentum space) by,
\begin{align*}
i\nabla_{\mu}=i\left( 
 \frac{\partial}{\partial p^{\mu}}-m^{-2}p_{\mu}p^{\nu}\frac{\partial}{\partial p^{\nu}}
\right).
\end{align*}
 The agreement of the center-of-mass operator with the Coleman-Mandula derivative is seen by rewriting the angular momentum in the momentum space  as a product of the momentum and derivatives w.r.t. the momentum.\\\\
 What is the deeper insight behind the idea that the Coleman-Mandula operator is in a relativistic context a rightful coordinate operator? Well, from considerations in quantum mechanics in the coordinate space we know that the momentum operator acts as the tangential operator to the coordinate. On the other hand in momentum space the coordinate operator is the tangential operator to the momentum. Since,   in a relativistic particle theory we have the restriction of the mass shell condition on the momentum space, that comes from the Klein-Gordon Equation, this space is   restricted and cannot have the same tangential operator as for the non-restricted QM-case.
\\\\ In more mathematical terms, in the quantum mechanical case the momentum and the coordinate   are represented by unbounded self-adjoint operators  with the spectrum being all of the real vector space. Hence the eigenvalues go from minus infinity to plus infinity. However, in the relativistic case the momentum   which is a four-vector,   although a self-adjoint operator, has to  fulfill the so called spectrum-condition. This   means that  the spectrum of energy-momentum operator lies in the forward light cone.  Hence, the tangential derivative of this particular operator cannot be simply the derivative. This is due to the fact that the spectrum is not the whole real space as in the non-relativistic quantum mechanical case.   Therefore, the Newton-Wigner operator is not the tangential operator but the correct one is rather the Coleman-Mandula operator.\newline\newline
Next, we take a closer look at the algebra that the coordinate operator satisfies with   generators of the Poincar\'e group. Beforehand, we give the commutator relations of the Poincar\'e algebra 
\begin{align}\label{crpj}
[P_{\mu},P_{\nu}]=0,\qquad[J_{\rho\sigma},P_{\mu}]=i (\eta_{\mu\rho}P_{\sigma}-\eta_{\mu\sigma}P_{\rho} ), 
\end{align} 
\begin{align}\label{crpj1}
[J_{\mu\nu},J_{\rho\sigma}]=i\left(
\eta_{\mu\rho}J_{\nu\sigma }-\eta_{\mu\sigma}J_{\nu\rho }-\eta_{\nu\rho}J_{\mu\sigma }+\eta_{\nu\sigma }J_{\mu\rho }
\right).
\end{align} 

 	First, we  consider the commutator of the Coleman-Mandula operator  with the momentum.
\begin{lemma}
	By taking the relativistic definition of the center-of-mass world-line we obtain the following commutator relation with the momentum operator 
\begin{equation}\label{cr1}
[X_{\mu},P_{\nu}]=i\biggl(\eta_{\mu\nu}-\frac{P_{\mu}P_{\nu}}{M^2}\biggr).
\end{equation} 
\end{lemma}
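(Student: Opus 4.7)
The plan is to compute both $[J_{\mu\alpha}P^{\alpha},P_{\nu}]$ and $[P^{\alpha}J_{\mu\alpha},P_{\nu}]$ using the Leibniz rule and the Poincar\'e commutation relations \eqref{crpj}, then recombine and divide by $2M^{2}$. Before any of this can proceed, I need the observation that $M^{2}=P_{\alpha}P^{\alpha}$ commutes with $P_{\nu}$, which is immediate from $[P_{\mu},P_{\nu}]=0$; this is what legitimises pulling the factor $1/(2M^{2})$ outside the commutator bracket in \eqref{cop}.

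For the first piece, I would expand
\begin{equation*}
[J_{\mu\alpha}P^{\alpha},P_{\nu}]=[J_{\mu\alpha},P_{\nu}]\,P^{\alpha}+J_{\mu\alpha}\,[P^{\alpha},P_{\nu}],
\end{equation*}
and note that the second term vanishes by \eqref{crpj}. Substituting $[J_{\mu\alpha},P_{\nu}]=i(\eta_{\nu\mu}P_{\alpha}-\eta_{\nu\alpha}P_{\mu})$ and contracting with $P^{\alpha}$, the result collapses to $i\eta_{\mu\nu}M^{2}-iP_{\mu}P_{\nu}$. An analogous computation for $[P^{\alpha}J_{\mu\alpha},P_{\nu}]$, where now the $[P^{\alpha},P_{\nu}]$ piece again drops out and the order of factors is reversed, yields $i\eta_{\mu\nu}M^{2}-iP_{\nu}P_{\mu}$. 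Here I would flag that $P_{\mu}$ and $P_{\nu}$ commute, so the two contributions are genuinely identical.

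Adding the two pieces and dividing by $2M^{2}$ then gives \eqref{cr1} directly. The only subtlety worth emphasising in the write-up is the ordering question for $M^{-2}$: because $M^{2}$ is central among the momenta, one can freely factor it through the commutator with $P_{\nu}$, but one should check (and I would, in a line) that $M^{-2}$ itself is well-defined on the same domain, which in the relativistic setting relies on the spectrum-condition ensuring $M^{2}>0$ on the physical subspace. Beyond that bookkeeping, there is no genuine obstacle; the identity is a direct consequence of the Poincar\'e algebra together with the Casimir property of $M^{2}$.
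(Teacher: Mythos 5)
Your proposal is correct and follows essentially the same route as the paper: a direct Leibniz-rule computation using the Poincar\'e relations \eqref{crpj}, with the factor $1/(2M^2)$ pulled through because $M^2$ is central among the momenta. The only cosmetic difference is that the paper first rewrites $P^{\rho}J_{\mu\rho}=J_{\mu\rho}P^{\rho}+[P^{\rho},J_{\mu\rho}]$ and discards the resulting $P_{\mu}$-proportional term (which commutes with $P_{\nu}$), whereas you evaluate the two orderings separately and observe they coincide; your extra remark on the invertibility of $M^2$ via the spectrum condition is a welcome, if inessential, addition.
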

\begin{proof}  
	Hence, by using the Poincar\'e algebra the commutator is explicitly calculated,
	\begin{align*}
	[X_{\mu},P_{\nu}]&=\frac{1}{2{M^2}}[({J_{\mu\rho}P^{\rho}}+{P^{\rho}J_{\mu\rho}}),P_{\nu}]\\&
	=\frac{1}{2{M^2}}[ 2{J_{\mu\rho}P^{\rho}}+ [P^{\rho},J_{\mu\rho}]   ,P_{\nu}]
	 \\& 
			=\frac{i}{ {M^2}}(\eta_{\mu\nu}P_{\rho}-\eta_{\rho\nu}P_{\mu}
			) P^{\rho},
	\end{align*}
	where in the last lines we used the definition of the mass operator (see Formula (\ref{em})).
\end{proof}
\begin{remark}
	In order to keep track of the dimensions and the physical constants we write the former commutator relation without the usual convention $\hbar=c=1$, e.g. $$
	[X_{\mu},P_{\nu}]=i\hbar\biggl(\eta_{\mu\nu}-\frac{P_{\mu}P_{\nu}}{M^2c^4}\biggr). $$ Hence, the extra term acts as a relativistic correction to the uncertainty principle. For negligible spatial momentum ($c\rightarrow\infty$), as is the case in non-relativistic quantum mechanics, we obtain the common Heisenberg relations.
	For $\mu,\nu=0$ and negligible spatial momentum the commutator is zero, therefore it respects Pauli's theorem about the so called "time" operator. Furthermore, notice that the modified commutator relations between the coordinate and the momentum operator, given in Equation (\ref{cr1}) are  equivalent to considering  a momentum dependent metric (see \cite{DV} and references therein),  i.e. 
\begin{equation*}
g_{\mu\nu}(P) = \eta_{\mu\nu} -\frac{P_\mu P_\nu}{M^2 c^4}.
\end{equation*}
\end{remark}
The spatial part of this commutator relation is the origin of the quantum mechanical uncertainty principle with an additional term. The extra term can be  physical motivated although it was not imposed from the get-go. 	In a series of papers  \cite{Mag}, \cite{Mag2}  and \cite{Mag3}, generalized quantum mechanical commutation relations were considered  motivated by quantum-gravity principles. 
These generalized uncertainty principles are exactly of the form that we obtained by promoting the center-of-mass   world-line to an operator, see Equation (\ref{cr1}). Due to the additional term, w.r.t. the common canonical commutation relations   one is led to a minimal length (see \cite{Mag}, \cite{Mag2}  and \cite{Mag3}), which is the  principle  idea of noncommutative geometry applied to  physics.
In particular, the spatial part of the operator $X$ can be considered as a generalization or deformation of the standard position operator given in QFT, known by the name of Newton-Wigner-operator (see \cite{NW49}, \cite{PR48}, \cite{J80}, \cite{SS} and \cite{Muc3}). This statement is further clarified in the following sections. 
\newline\newline  Note that is was proven in the aforementioned papers  that a  noncommutative spacetime emerges if and only if the coefficient in front of the extra-term in the generalized commutation relations is positive. This statement would coincide in our case with the   requirement  	$M^{2}>0$, which is a restriction with well-implied physical meaning.  

\begin{lemma} \label{le1}
	The following commutation relations prove  that the   center-of mass operator,  transforms covariant
	\begin{equation}\label{e1}
	[J_{\rho\sigma},X_{\mu}]= i\biggl(\eta_{\mu\rho}X_{\sigma}-\eta_{\mu\sigma}X_{\rho}\biggr) 
	\end{equation} 
	and the non-commutativity of this operator generates a particular \textbf{Snyder-spacetime},
	\begin{equation}\label{ss}
	[X_{\mu},X_{\nu}]= i\frac{J_{\mu\nu}}{M^2}.
	\end{equation} 
\end{lemma}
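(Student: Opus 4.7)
The strategy for both identities is to reduce everything to the Poincar\'e commutation relations~(\ref{crpj}) and~(\ref{crpj1}), using the fact that $M^{2}=P_{\mu}P^{\mu}$ is the quadratic Casimir and therefore commutes with every $P_{\sigma}$ and $J_{\rho\sigma}$. This lets me pull $M^{-2}$ through any commutator with a Poincar\'e generator, so that all the actual work is carried out at the level of $J_{\mu\nu}P^{\nu}$. As a preliminary I would use the contracted identity $[P^{\rho},J_{\mu\rho}]=i(d-1)\,P_{\mu}$, which follows directly from~(\ref{crpj}), to rewrite the symmetric definition~(\ref{cop}) in the simpler form
\begin{equation*}
X_{\mu}\;=\;\frac{1}{M^{2}}\,J_{\mu\rho}P^{\rho}\;+\;\frac{i(d-1)}{2M^{2}}\,P_{\mu}.
\end{equation*}
Since $[P_{\mu},P_{\nu}]=0$ and since, by~(\ref{cr1}), the $P_{\mu}/M^{2}$ tail contributes only symmetric pieces in $(\mu,\nu)$ to any commutator with $X_{\nu}$, this tail will drop out of both~(\ref{e1}) and~(\ref{ss}) up to cancelling cross-terms.

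For~(\ref{e1}), I would expand $[J_{\rho\sigma},X_{\mu}]$ with the Leibniz rule applied to $M^{-2}J_{\mu\nu}P^{\nu}$. Every $\eta$-factor produced by~(\ref{crpj1}) or~(\ref{crpj}) either reassembles into a $J_{\mu'\nu}P^{\nu}/M^{2}$ with the freed index sitting in the right place, or it is killed by the antisymmetry of the commutator in $(\rho,\sigma)$. The tail $P_{\mu}/M^{2}$ transforms as a vector by~(\ref{crpj}) alone. Taken together, the $\eta$-tensors assemble into $i\bigl(\eta_{\mu\rho}X_{\sigma}-\eta_{\mu\sigma}X_{\rho}\bigr)$; this merely re-expresses the fact that $X_{\mu}$ has been built covariantly out of a tensor, a vector and a scalar, so the outcome is essentially forced.

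For~(\ref{ss}) the substantive work begins. Factoring out $M^{-4}$ by the centrality of $M^{2}$, it suffices to compute $[K_{\mu},K_{\nu}]$ with $K_{\mu}:=J_{\mu\rho}P^{\rho}$; the cross terms between $K$ and the $P_{\mu}$-tail cancel after a short calculation using~(\ref{cr1}) together with the symmetry of $\eta_{\mu\nu}$ and $[P_{\mu},P_{\nu}]=0$. I would then apply the $[AB,CD]$-Leibniz expansion to $[K_{\mu},K_{\nu}]$, which produces three pieces: (i) a double-$J$ piece $[J_{\mu\rho},J_{\nu\sigma}]P^{\rho}P^{\sigma}$, (ii) a mixed piece $J_{\mu\rho}[P^{\rho},J_{\nu\sigma}]P^{\sigma}$, and (iii) its partner $J_{\nu\sigma}[J_{\mu\rho},P^{\sigma}]P^{\rho}$. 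In (i), the antisymmetry of $J_{\rho\sigma}$ against the symmetric $P^{\rho}P^{\sigma}$ annihilates the $\eta_{\mu\nu}J_{\rho\sigma}$-term, while the $\eta_{\rho\sigma}J_{\mu\nu}$-term collapses $P^{\rho}P_{\rho}$ to $M^{2}$ and produces the sought $M^{2}J_{\mu\nu}$. The main obstacle is the bookkeeping that remains: across (i), (ii) and (iii) one finds residual terms all of the form $J_{\mu\sigma}P^{\sigma}P_{\nu}$ or $J_{\nu\rho}P^{\rho}P_{\mu}$, and a careful relabelling of dummy indices (using $P_{\mu}P_{\nu}=P_{\nu}P_{\mu}$) is needed to show that they cancel pairwise. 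Only the $M^{2}J_{\mu\nu}$ contribution survives; restoring the $M^{-4}$ prefactor then yields~(\ref{ss}).
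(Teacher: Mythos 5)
Your proposal is correct and follows essentially the same route as the paper: it rewrites $X_{\mu}$ via the contracted identity $[P^{\rho},J_{\mu\rho}]=i(d-1)P_{\mu}$ (the paper's $3i\,P_{\mu}$ for $d=4$) into $M^{-2}J_{\mu\rho}P^{\rho}$ plus a $P_{\mu}$-tail, exploits the centrality of $M^{2}$, and performs the same Leibniz expansion of $[J_{\mu\rho}P^{\rho},J_{\nu\sigma}P^{\sigma}]$ with the same cancellation pattern leaving only the $\eta_{\rho\sigma}J_{\mu\nu}P^{\rho}P^{\sigma}=M^{2}J_{\mu\nu}$ term. The only difference is cosmetic (keeping the dimension $d$ general), so there is nothing substantive to add.
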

\begin{proof}
The proof is a straight forward calculation  and in order to see that the Coleman-Mandula operator transforms covariant we first calculate the commutator with the Lorentz generator, i.e. 
	\begin{align*}
	2{M^2}[J_{\rho\sigma},X_{\mu}]&=  [J_{\rho\sigma},({2J_{\mu\nu}P^{\nu}}+{[P^{\nu},J_{\mu\nu}]})]\\&= 
	\left(
	2[J_{\rho\sigma},J_{\mu\nu}]P^{\nu}+2J_{\mu\nu}[J_{\rho\sigma},P^{\nu}]
	+3i 
	[J_{\rho\sigma}, P_{\mu}]\right) 
	\\&=i
	\biggl(  -
	2\left(
	\eta_{\mu\rho}J_{\nu\sigma }P^{\nu}-\eta_{\mu\sigma}J_{\nu\rho }P^{\nu}- J_{\mu\sigma }P_{\rho}+ J_{\mu\rho }P_{\sigma}
	\right)
	\\&+2  
	(J_{\mu\rho} P_{\sigma}-J_{\mu\sigma} P_{\rho} ) 
	+3i  (\eta_{\mu\rho}P_{\sigma}-\eta_{\mu\sigma}P_{\rho} ) \biggr)
	\\&=i
	\biggl(   
	2\left(	\eta_{\mu\rho}J_{\sigma\nu }P^{\nu}
	-\eta_{\mu\sigma}J_{\rho \nu }P^{\nu} 
	\right) 
	+3i  (\eta_{\mu\rho}P_{\sigma}-\eta_{\mu\sigma}P_{\rho} ) \biggr).
	\end{align*}
	Next we turn to the commutator of different components of the  center-of mass operator,
	\begin{align*}
	4{M^4}[X_{\mu},X_{\nu}]&=  [ 2J_{\mu\rho}P^{\rho}+ 3iP_{\mu} , 2J_{\nu\sigma}P^{\sigma} + 3iP_{\nu} ]\\&
	= 4 [  J_{\mu\rho}P^{\rho}, J_{\nu\sigma}P^{\sigma}]+ 6i\bigl(
	[  J_{\mu\rho}P^{\rho}, P_{\nu}]-\mu\leftrightarrow\nu\bigr) 
	\\& 
	= 4 [  J_{\mu\rho}, J_{\nu\sigma}]P^{\sigma}P^{\rho}+
	4 J_{\nu\sigma}[  J_{\mu\rho}, P^{\sigma}]P^{\rho}+
	4 J_{\mu\rho}[  P^{\rho}, J_{\nu\sigma}]P^{\sigma}    
	\\&+  
	6i \bigl(
	[  J_{\mu\rho}, P_{\nu}]P^{\rho}-\mu\leftrightarrow\nu\bigr) 
	\\&=4 i\left( - J_{\rho\nu }P^{\rho}P_{\mu}- J_{\mu\rho }P^{\rho}P_{\nu}+ J_{\mu\nu }P_{\rho}P^{\rho}
	\right)\\&+
	4i J_{\nu\sigma}
	(\eta_{\mu}^{\sigma}P_{\rho}  P^{\rho}-P^{\sigma}P_{\mu})
	-
	4i J_{\mu\rho}
	( \eta_{\nu}^{\rho}P_{\sigma}   P^{\sigma}   -P^{\rho}P_{\nu})
	-
	6i \bigl(  [P_{\mu},P_{\nu}]\bigr) 
	\\& =  + 4i
	(J_{\mu\nu}P_{\rho}  P^{\rho}) .
	\end{align*}\end{proof} 
This type of algebra was already noticed in $1947$ by Snyder, see \cite{SN}, and it  is considered as the first work on noncommutative  spacetimes,  that in addition were implemented in a Lorentz invariant fashion. 
In the work of  Snyder the mass did not play the major role that it plays in our case. If one would interchange in our formulas of the mass operator $M$ with  	$1/a$ where $a$ is a natural length one would obtain the Snyder algebra. So, in an intuitive sense, the physical implication of our coordinate operator  is the following: the natural length of non-commutativity depends on the mass of the particle under observation. This in particular means that the heavier an object is the less one notices the 
underlying noncommutative spacetime. Let us at this point mention the de Broglie wavelength which is given by $\lambda= {{\hbar}}/{p}$. The de Broglie formula tells us that a particle with small mass  behaves more like a wave as a particle, with wavelength given by \(\lambda\). 
In particular this means that the lighter a particle is the more it will act as a QM object. One has to point out that the QM nature of a particle depends on $\hbar$ (Planck constant) and of the mass of the particle (and of course it also depends on the velocity which we leave out for the moment).    
 \\\\
Our coordinate operators have additional properties which distinguish them from the generators obtained by Snyder. 
\begin{lemma}\label{l13}
The mass operator $M^2=P_{\mu}P^{\mu}$ is a Casimir for the whole algebra, i.e. for $(X,P,J)$.
\end{lemma}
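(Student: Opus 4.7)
The plan is to verify separately that $M^{2}$ commutes with each of the three families of generators $P_{\mu}$, $J_{\rho\sigma}$, and $X_{\mu}$. Two of the three are essentially automatic from the Poincar\'e algebra (\ref{crpj})--(\ref{crpj1}), and only the commutator with $X_{\mu}$ requires the new relation (\ref{cr1}) established earlier.

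First, $[P_{\mu},M^{2}]=0$ is immediate from $[P_{\mu},P_{\nu}]=0$. For the Lorentz generators I would use the Leibniz rule,
\begin{equation*}
[J_{\rho\sigma},M^{2}]=[J_{\rho\sigma},P_{\lambda}]P^{\lambda}+P_{\lambda}[J_{\rho\sigma},P^{\lambda}],
\end{equation*}
substitute (\ref{crpj}), and observe that the resulting four terms pair up into commutators of the form $P_{\rho}P_{\sigma}-P_{\sigma}P_{\rho}$, which vanish because momenta commute. This is just the standard statement that the mass Casimir of the Poincar\'e algebra is Lorentz-invariant.

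The only genuinely new piece is $[X_{\mu},M^{2}]=0$. Here I would apply the Leibniz rule once more and then use Lemma~1.1:
\begin{equation*}
[X_{\mu},M^{2}]=[X_{\mu},P_{\lambda}]P^{\lambda}+P_{\lambda}[X_{\mu},P^{\lambda}]=i\Bigl(\eta_{\mu\lambda}-\frac{P_{\mu}P_{\lambda}}{M^{2}}\Bigr)P^{\lambda}+iP_{\lambda}\Bigl(\delta^{\lambda}_{\mu}-\frac{P_{\mu}P^{\lambda}}{M^{2}}\Bigr).
\end{equation*}
Each bracket evaluates to $P_{\mu}-P_{\mu}P_{\lambda}P^{\lambda}/M^{2}$, which collapses to $P_{\mu}-P_{\mu}=0$ since $P_{\lambda}P^{\lambda}=M^{2}$ and $M^{2}$ commutes with all momenta.

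The only subtlety to flag is operator ordering involving the inverse $1/M^{2}$. Because $M^{2}$ is built purely out of momenta and $[P_{\mu},P_{\nu}]=0$, the operator $M^{2}$ (and hence $1/M^{2}$, defined on the positive-mass sector where the statement makes sense) lies in the centre of the momentum subalgebra. Consequently the factor $1/M^{2}$ in $X_{\mu}$ can be freely moved past any $P^{\lambda}$ when simplifying the above expressions, so no ordering ambiguity arises. With the three commutators shown to vanish, $M^{2}$ is central in the algebra generated by $(X,P,J)$, which is the claim.
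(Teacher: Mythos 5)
Your proof is correct, but it takes a more explicit route than the paper, whose entire argument is the one-line observation that $X_{\mu}$ is built algebraically out of the Poincar\'e generators $P$ and $J$, for which $M^{2}$ is the standard quadratic Casimir; centrality of $M^{2}$ on the whole algebra $(X,P,J)$ then follows structurally, with no need to invoke the deformed relation (\ref{cr1}) at all. You instead verify the three commutators by hand, and in particular derive $[X_{\mu},M^{2}]=0$ from Lemma~1.1 via the Leibniz rule, watching the two terms $i\bigl(P_{\mu}-P_{\mu}P_{\lambda}P^{\lambda}/M^{2}\bigr)$ each collapse to zero. What the paper's argument buys is brevity and independence from (\ref{cr1}); what yours buys is a concrete consistency check that the modified commutator $[X_{\mu},P_{\nu}]=i\bigl(\eta_{\mu\nu}-P_{\mu}P_{\nu}/M^{2}\bigr)$ is compatible with $M^{2}$ being central (indeed, the extra term is precisely what makes $[X_{\mu},P_{\lambda}]P^{\lambda}$ vanish, unlike for the ordinary Heisenberg relations), plus an explicit handling of the operator-ordering issue for $1/M^{2}$ that the paper leaves implicit. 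Both are valid; yours is the more self-contained verification.
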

\begin{proof}
The proof is straight forward  since $X$ is a product of the Poincar\'e group for which the mass operator is the central element.  
\end{proof}After we introduced the algebra and the motivation for the Coleman-Mandula operator the rest of the paper is organized as follows;  Section two represents the operator in terms of the free scalar field and further on we prove the essential self-adjointness of the operator. Furthermore, we prove the Poincar\'e-covariant transformation behavior of the    spacetime that is generated by the Coleman-Mandula operator. The third section is concerned with the deformation of the  Coleman-Mandula  operator, where first we prove that the deformation is mathematically well-defined and  afterwards calculate the explicit result  of the deformation. We conclude the third section by proving, as in the undeformed case, the covariant transformation of the noncommutative spacetime that is generated by the deformed Coleman-Mandula operator. 

\section{Properties of the Coleman-Mandula Operator}
"No deformation without representation!", \cite{PI}. In order to perform a deformation of the Coleman-Mandula operator we need to choose a representation.  The choice of representation is the free massive scalar field. Hence, in the following subsection we  express the Coleman-Mandula operator in the Fock-space of the free field. This is done    by rewriting the formally hermitian product of the Poincar\'e generators in the momentum space and then  expressing those in terms of the Fock-space operators. The plan to proceed in this section is the following; After introducing the Fock-space briefly we write the conjugate operator $X_{\mu}$ for a one-particle subspace. Next, we show that the operator is self-adjoint, hence the identification with a generalized observable, as the coordinate, can be justified from the point of view of real eigenvalues. 
\newline\newline
Since we  use the scalar field to give a representation of the Coleman-Mandula operators
in terms of explicit Fock-space operators, we give a short summary of the Fock-space and  the operators that are used in the subsequent sections. 
\subsection{Bosonic Fock-space}\label{sbfs} The four-dimensional     Fock-space for a free massive scalar field is defined as follows. A field with momentum $\mathbf{p} \in \mathbb{R}^3$ has the energy $\omega_{\mathbf{p}}=+\sqrt{\mathbf{p}^2+m^2}$. Moreover the Lorentz-invariant measure is given by   $d\mu(\mathbf{p} )=d^3\mathbf{p}/( {2\omega_{\mathbf{p}}})$.
\begin{definition}\label{bf}
	The \textbf{Bosonic Fock-space} $\mathscr{F^{+}({H})}$ is defined 
	as in \cite{BR}:
	\begin{equation*}
	\mathscr{F^{+}({H})}=\bigoplus_{k=0}^{\infty}\mathscr{H}_{k}^{+},
	\end{equation*}
	where $\mathscr{H}_{0}=\mathbb{C}$ and  the symmetric $k$-particle subspaces are given as
	\begin{align*}
	\mathscr{H}_{k}^{+} =\{\Psi_{k}: \underbrace{	H^{+}_{m}  \times  \dots \times 
		H^{+}_{m}}_{k-times} \rightarrow \mathbb{C}\quad \mathrm{symmetric}
	| \left\Vert  \Psi_k \right\Vert^2  <\infty\},
	\end{align*}
	with $
	H^{+}_{m}:=\{p\in \mathbb{R}^{4}|p^2=m^2,p_0>0\}.$
\end{definition}
The  annihilation and creation operators $a,a^{*}$ of the   
Fock-space satisfy,  
\begin{align}\label{pccr}
[a_c(\mathbf{p}), a_c^{*}(\mathbf{q})]=2\omega_{\mathbf{p} 
}\delta^3(\mathbf{p}-\mathbf{q}), \qquad
[a_c(\mathbf{p}), a_c(\mathbf{q})]=0=[a_c^{*}(\mathbf{p}), a_c^{*}(\mathbf{q})].
\end{align} 
By using  $a_c,a_c^{*}$   the particle number operator and the momentum operator are  defined in the following manner.
\begin{equation}\label{pcaopm}
N=\int d\mu(\mathbf{p}) a_c^{*}(\mathbf{p})a_c(\mathbf{p}), \qquad P_{\mu}=\int
d\mu(\mathbf{p})\,  p_{\mu} a_c^{*}(\mathbf{p})a_c(\mathbf{p}),
\end{equation}
where  $p_{\mu}=(\omega_{\mathbf{p}}, \mathbf{p})$.  
The generators  of the proper orthochronous Lorentz group $\mathscr{L}^{\uparrow}_{+}$ are given in terms of the  covariant ladder operators   by, \cite[Equation
3.54]{IZ} and in this context see also \cite[Appendix]{SS}
\begin{align}\label{lbcaop1} 
J_{j0}&= -i\int d\mu(\mathbf{p})\,{a}_c^{*}(\textbf{p}) \,\big( \omega_{\textbf{p}}\frac{\partial}{\partial p^j } \big)  a_c(\textbf{p}),
\\ \label{lbcaop2}
J_{ik}&=i \int d\mu(\mathbf{p})\,  {a}_c^{*}(\textbf{p}) 
\left(p_i \frac{\partial}{\partial p^k }-p_k\frac{\partial}{\partial p^i }\right)
a_c(\textbf{p}),
\end{align}
where $J_{0j}$ denotes the boost operators and $J_{ik}$ rotations. 

\subsection{Representation of the Coleman-Mandula Operator in Fock-space}
In this section we first represent the Coleman-Mandula operator by using the massive scalar field.  Furthermore, we prove the essential self-adjointness of this operator in  order to identify the Coleman-Mandula operator with an observable. Subsequently, we calculate the action of the Poincar\'e group on the Coleman-Mandula operator.
We proceed with the first theorem concerning the concrete representation of the generalized coordinate operator. 
\begin{theorem}
	The representation of the temporal part of the Coleman-Mandula operator $X_{\mu}$ is given on a one-particle  wave-function $\varphi\in\mathscr{S}(\R^3)$ of the massive scalar field as,
	\begin{equation}\label{cop0}
	(X_{0}\varphi)(\mathbf{p})=-\frac{i}{m^2}\omega_{\textbf{p}}\left(\frac{3}{2}+ p^k\frac{\partial}{\partial p^k}
	\right)\varphi(\mathbf{p}),
	\end{equation}
while the spatial part of the operator is given by,
	\begin{equation}\label{xrm}
		(X_{j}\varphi)(\mathbf{p})=+i\bigg(
	\frac{3 \,p_{j}}{2m^2} + 
	\frac{p_jp^{k}}{ m^2}	  \frac{\partial}{\partial p^k }- \frac{\partial}{\partial p^j} \bigg)\varphi(\mathbf{p}).
		\end{equation}
	
	 Moreover,  the Coleman-Mandula operator $X_{\mu}$ is  hermitian on the domain $\mathscr{S}(\R^3)$ w.r.t. to the Lorentz-invariant measure, 
	  i.e. w.r.t. the inner product on 	$\mathscr{H}_{1}^{+},$
	  \begin{equation}
	  \langle \psi, X_{\mu}\varphi\rangle= 	\langle X_{\mu}\psi, \varphi\rangle.
	  \end{equation}
\end{theorem}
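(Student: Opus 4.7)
My plan has three phases: first, rewrite $X_\mu$ into a convenient one-particle form; second, grind out the action on a wave function; third, verify hermiticity by integration by parts against the Lorentz-invariant measure.

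Phase one: eliminate the symmetric ordering. From the Poincar\'e algebra $[P^\nu,J_{\mu\nu}] = 3iP_\mu$ in $d=4$, the defining formula \eqref{cop} collapses to
\begin{equation*}
X_\mu \;=\; \frac{1}{M^2}\,J_{\mu\nu}P^\nu \;+\; \frac{3i}{2M^2}\,P_\mu.
\end{equation*}
On the one-particle subspace $\mathscr{H}_1^+$, Lemma \ref{l13} (or just the mass shell) gives $M^2 = m^2$ and $P^\nu$ acts by multiplication by $p^\nu$, so this is now a first-order differential operator on $\mathscr{S}(\mathbb{R}^3)$.

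Phase two: substitute the representations \eqref{lbcaop1}-\eqref{lbcaop2} and apply Leibniz. For $\mu=0$ only $J_{0k}P^k$ survives (since $J_{00}=0$); acting on $p^k\varphi(\mathbf{p})$ gives $i\omega_{\mathbf{p}}\bigl(3\varphi + p^k\partial_k\varphi\bigr)$, and adding $\tfrac{3i}{2}P_0\varphi = \tfrac{3i}{2}\omega_{\mathbf{p}}\varphi$ and dividing by $m^2$ produces the stated formula \eqref{cop0}. For $\mu=j$ the nontrivial input is the $\mu=0$ piece: $J_{j0}(\omega_{\mathbf{p}}\varphi)$ generates an $\omega_{\mathbf{p}}^2\partial_j\varphi$ contribution via $\partial_l\omega_{\mathbf{p}} = p^l/\omega_{\mathbf{p}}$, while $J_{jk}(p^k\varphi)$ generates a $|\vec{p}|^2\partial_j\varphi$ contribution from the $p_k p^k$ factor. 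The on-shell identity $\omega_{\mathbf{p}}^2 - |\vec{p}|^2 = m^2$ collapses these to $-m^2\partial_j\varphi$, and assembling the remaining $p_j$-proportional terms yields \eqref{xrm}.

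Phase three: hermiticity. Abstractly, writing $X_\mu = (2M^2)^{-1}(A+A^*)$ with $A = J_{\mu\nu}P^\nu$ and using that $M^2$ is central (Lemma \ref{l13}) makes hermiticity manifest, but one must still check it at the level of the concrete representation. For that I would integrate by parts against $d\mu(\mathbf{p}) = d^3\mathbf{p}/(2\omega_{\mathbf{p}})$, which is legitimate on $\mathscr{S}(\mathbb{R}^3)$ because boundary terms vanish. Transferring $\partial_k$ from $\varphi$ to $\bar\psi$ picks up two extra contributions: one from differentiating the coefficient $p_j p^k$, one from differentiating the measure factor $1/(2\omega_{\mathbf{p}})$ via $\partial_k(1/2\omega_{\mathbf{p}}) = -p^k/(2\omega_{\mathbf{p}}^3)$. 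The plan is to show that these extras combine, using $\omega_{\mathbf{p}}^2 = |\vec{p}|^2+m^2$, exactly into the $\tfrac{3p_j}{2m^2}$ multiplicative term appearing in \eqref{xrm}, and similarly for the $X_0$ case.

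The main obstacle is the third phase: the $\tfrac{3}{2}$ coefficient is not cosmetic but is precisely what the symmetric ordering in \eqref{cop} produces, and verifying that it is the unique coefficient compatible with symmetry against the Lorentz-invariant measure requires carefully tracking four different sources of $p_\mu/\omega_{\mathbf{p}}$ terms (two from Leibniz on the $p_j p^k$ coefficient, one from differentiating $\omega_{\mathbf{p}}$ inside $X_0$, one from the measure). The other phases are largely algebraic bookkeeping once one is disciplined about upper versus lower spatial indices in our signature.
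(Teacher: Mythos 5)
Your overall strategy is the same as the paper's: eliminate the symmetric ordering via the Poincar\'e algebra, substitute the Fock-space representations of $P_\mu$ and $J_{\mu\nu}$ on the one-particle space, and verify hermiticity by partial integration against $d\mu(\mathbf{p})=d^3\mathbf{p}/(2\omega_{\mathbf{p}})$, tracking the derivative of the measure factor. Your phase-one identity $[P^\nu,J_{\mu\nu}]=3iP_\mu$ and the resulting form $X_\mu=M^{-2}J_{\mu\nu}P^\nu+\tfrac{3i}{2}M^{-2}P_\mu$ are correct, the on-shell cancellation $\omega_{\mathbf{p}}^2-|\vec p|^2=m^2$ is indeed the mechanism producing the clean $-\partial/\partial p^j$ term in (\ref{xrm}), and your integration-by-parts plan for hermiticity matches the paper's term-by-term computation.

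There is, however, a concrete arithmetic inconsistency in your phase-two treatment of $X_0$, and it is exactly the index-discipline trap you flag at the end. You assert that $J_{0k}P^k$ acting on $\varphi$ gives $+i\omega_{\mathbf{p}}\bigl(3\varphi+p^k\tfrac{\partial}{\partial p^k}\varphi\bigr)$; adding $\tfrac{3i}{2}\omega_{\mathbf{p}}\varphi$ then yields a constant coefficient $\tfrac{9}{2}$, not the $-\tfrac{3}{2}$ of Equation (\ref{cop0}). The Leibniz constant produced by $J_{0\nu}P^\nu$ is forced by the algebra: since $J_{0\nu}P^\nu=P^\nu J_{0\nu}+[J_{0\nu},P^\nu]=P^\nu J_{0\nu}-3iP_0$ and $P^\nu J_{0\nu}$ has the derivative rightmost (hence no Leibniz term), the reordering contribution must be $-3i\omega_{\mathbf{p}}\varphi$, not $+3i\omega_{\mathbf{p}}\varphi$. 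The sign you dropped comes from the mostly-minus signature ($p^k=-p_k$, while the Fock formulas for $J_{j0}$, $J_{jk}$ pair $\partial/\partial p^j$ with lower-index momenta); only with $-3i\omega_{\mathbf{p}}\varphi$ does the constant combine as $-3+\tfrac{3}{2}=-\tfrac{3}{2}$. The paper sidesteps this entirely by normal-ordering the other way, writing $X_0=\tfrac{1}{2m^2}\bigl(-3iP_0+2P^\nu J_{0\nu}\bigr)$ so that the $-\tfrac{3}{2}$ is explicit and no Leibniz term ever appears. Either fix that one sign or adopt the paper's ordering; the rest of your outline, including the hermiticity bookkeeping of the four $p_\mu/\omega_{\mathbf{p}}$ sources, then goes through as planned.
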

\begin{proof}
	The calculation is straight-forward by using the concrete form of the Poincar\'e generators which are given explicitly in Equations (\ref{pcaopm}) and (\ref{lbcaop1}). We first calculate the zero-component,
	\begin{align*}
(	X_0\varphi)(\mathbf{p})&=\frac{1}{2{m^2}}\big(({J_{0\nu}P^{\nu}}+{P^{\nu}J_{0\nu}}\big)\varphi)(\mathbf{p})
  =\frac{1}{2{m^2}}\big(({ [J_{0\nu},P^{\nu}]}+{2P^{\nu} J_{0\nu} }\big)\varphi)(\mathbf{p})
\\&=\frac{1}{2{m^2}}\big((-{3iP_0}+{2P^{k}J_{0k}}\big)\varphi)(\mathbf{p}) = -\frac{i}{m^2} \omega_{\textbf{p}} \left(  
 \frac{3}  {2 }  +   p^k  
  \frac{\partial}{\partial p^k } \right) \varphi (\mathbf{p})
	\end{align*}
	where in the last lines we used the algebra of the Poincar\'e group, their explicit form acting on a momentum vector state, 
	\begin{align*}
	(	X_j\varphi)(\mathbf{p})&=\frac{1}{2{m^2}}\big(({J_{j\nu}P^{\nu}}+{P^{\nu}J_{j\nu}}\big)\varphi)(\mathbf{p})
 =\frac{1}{2{m^2}}\big(\big({ [J_{j\nu},P^{\nu}]}+2{P^{\nu} J_{j\nu} }\big)\varphi\big)(\mathbf{p})
	\\&=\frac{1}{2{m^2}}\big(( {3iP_j}+{2{P^{\nu} J_{j\nu} }}\big)\varphi)(\mathbf{p}) 	\\&=\frac{i}{m^2} p_{j}
	\left( 
	\frac{3  }{2}  +  p^{k}	  \frac{\partial}{\partial p^k }\right)\varphi(\mathbf{p})-i      \frac{\partial}{\partial p^j }  \varphi(\mathbf{p}) .
	\end{align*}
	Next we turn to the proof of hermiticity and start by considering the spatial part of the operator $X$. Moreover we split the operator into three parts,
		\begin{align*}\left(
-	X_j\varphi\right)(\mathbf{p}) =-i\bigg(
	\frac{3 \,p_{j}}{2m^2}  + \frac{p_jp^{k}}{m^2}	  \frac{\partial}{\partial p^k }- \frac{\partial}{\partial p^j} \bigg)\varphi(\mathbf{p}):=\left(  X_j^1\varphi+X_j^2\varphi+X_j^3\varphi\right)(\mathbf{p}).
		\end{align*}
	The considerations start with the second term, i.e.  
		\begin{align*}
		\langle \psi, X^2_{j}\varphi\rangle&=
		\int d\mu(\mathbf{p})\overline{\psi}(\mathbf{p})	
		(	X_j^2\varphi)(\mathbf{p})\\&=
		-\frac{i}{m^2} \int d\mu(\mathbf{p})\overline{\psi}(\mathbf{p})	
		\left(  p_jp^{k}	  \frac{\partial}{\partial p^k }\right)\varphi(\mathbf{p})  
		\\&=
		 \frac{i}{m^2} \int d\mu(\mathbf{p})\,\omega_{\textbf{p}}	\left( \frac{\partial}{\partial p^k }	\left(  \frac{p_jp^{k}}{\omega_{\textbf{p}}}	 \overline{\psi}(\mathbf{p})\right)	\right)
\varphi(\mathbf{p})  	\\&=
		\frac{i}{m^2} \int d\mu(\mathbf{p})\,\left(  \left( 
	3    + \frac{m^2}{\omega_{\textbf{p}}^2}+ { p^{k}}  \frac{\partial}{\partial p^k }
		\right)\overline{\psi}(\mathbf{p})	\right)
		\varphi(\mathbf{p}) ,
		\end{align*}
		where in the last lines a partial integration was performed. For the first term we have 
	\begin{align*}
	\langle \psi, X^1_{j}\varphi\rangle&=
	\int d\mu(\mathbf{p})\overline{\psi}(\mathbf{p})	
	(	X_j^1\varphi)(\mathbf{p}) =
i	\int d\mu(\mathbf{p})\overline{\psi}(\mathbf{p})	
\frac{\partial}{\partial p^j }\varphi(\mathbf{p})\\&=-
i	\int d\mu(\mathbf{p}) \,\omega_{\textbf{p}} \frac{\partial}{\partial p^j }	\left(  \frac{1}{\omega_{\textbf{p}}}	 \overline{\psi}(\mathbf{p})\right)
 \varphi(\mathbf{p})\\&=-
 i	\int d\mu(\mathbf{p}) \, 	\left(\left(  \frac{p_j}{\omega_{\textbf{p}}^2}	 +\frac{\partial}{\partial p^j }\right)\overline{\psi}(\mathbf{p})\right)
 \varphi(\mathbf{p}).
		\end{align*}
By adding all the terms hermiticity follows. Next, we turn to the zero component and consider the second term containing the derivative, which we denote by $X_0^2$,

		\begin{align*}
			\int d\mu(\mathbf{p})\overline{\psi}(\mathbf{p})	
		(	X^2_0\varphi)(\mathbf{p}) &= -\frac{i}{m^2}
		\int d\mu(\mathbf{p})\overline{\psi}(\mathbf{p})	\,
		 \omega_{\textbf{p}} \,
  p^k  
		 \frac{\partial}{\partial p^k } \varphi (\mathbf{p})\\&=
		 \frac{i}{m^2}
		 \int d\mu(\mathbf{p})\,	 \omega_{\textbf{p}} \left(\big(3+
		 p^k  
		 \frac{\partial}{\partial p^k } \big)\overline{\psi}(\mathbf{p})\right)	\,
	\varphi (\mathbf{p}),
		\end{align*}
	 where we performed a partial integration and used the fact that the measure comes with an energy dependent term.  	Another, equivalently valuable,  path to prove the concrete form or the hermiticity of the operator $X$ is   by representing the operators of the Poincar\'e group in their coordinate representation and evaluating them on the scalar product 
	\begin{equation}\label{spx}
	(\Phi,\Psi)_t=i\int_t d^3x \,\, \overline{\Phi (x)}\overleftrightarrow{ \partial_{0}}\Psi (x),
	\end{equation} 
	and afterwards performing a mass-shell Fourier transformation of the functions, i.e. 
	\begin{align*}
	\Phi (x)&=\langle x| \Phi\rangle = \int d\mu(\mathbf{p}) e^{-ipx}\langle p| \Phi\rangle=
	\int d\mu(\mathbf{p}) e^{-ipx}  \tilde{\Phi}(p).
	\end{align*}
	However, since the two different approaches lead to the same result,   we chose for the proof the more intuitive and efficient  momentum representation.
\end{proof}
 \begin{remark}
We recover the algebra of the Coleman-Mandula and momentum operator by using the explicit representation in the one-particle subspace. Moreover,    the authors in \cite{LS1},  \cite{LS2} reexamined the Snyder algebra and obtained a similar representation  for the spatial part (see \cite[Equation 3.1]{LS2}) as we did in Equation (\ref{xrm}). However, there are  considerable differences. The first difference is the fact that they use a  free parameter that is responsible for  the strength of the deformation of the usual commutator relations, i.e.
\begin{equation} 
[X_{\mu},P_{\nu}]=i\biggl(\eta_{\mu\nu}-\frac{P_{\mu}P_{\nu}}{\Lambda^2}\biggr) 
\end{equation}  
 while in our work we identified the parameter $\Lambda$ with the mass of the particle. Due to this deformation parameter their representation of the spatial part of the vector is given w.r.t. a different measure, while we simply use the well-known Lorentz-invariant  measure from QFT (or relativistic QM). Moreover, in \cite{LS2} they consider the two possibilities of    the deformation parameter $\Lambda^2$ being smaller or bigger than zero. Since we use positive energy representations we only have the positive case. 
 \end{remark} 
Next, we investigate the self-adjointness of the Coleman-Mandula operator. From a physical point of view this is important in order to guarantee real eigenvalues, i.e. it is conform with the quantum mechanical axiom that observables are represented by self-adjoint operators.
\begin{theorem}
The Coleman-Mandula operator $X_{\mu}$ is essentially self-adjoint  on the domain $\mathscr{S}(\R^3)$ w.r.t. to the Lorentz-invariant measure of the Hilbert space	$\mathscr{H}_{1}^{+}$ (see Section \ref{sbfs}). 
\end{theorem}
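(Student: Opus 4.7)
The plan is to appeal to Nelson's analytic-vector theorem. The preceding theorem already establishes symmetry of $X_\mu$ on the dense domain $\mathscr{S}(\R^3)\subset\mathscr{H}_{1}^{+}$; hence essential self-adjointness will follow as soon as a dense subspace of analytic vectors is exhibited.

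First, I would pass to the unitarily equivalent Hilbert space $L^{2}(\R^{3},d^{3}\mathbf{p})$ via the map $\varphi\mapsto(2\omega_{\mathbf{p}})^{-1/2}\varphi$. This preserves $\mathscr{S}(\R^{3})$ and conjugates $X_\mu$ into a symmetric first-order differential operator $\tilde X_\mu=i\bigl(A_\mu^{k}(\mathbf{p})\,\partial_k+B_\mu(\mathbf{p})\bigr)$ whose coefficients are smooth real functions of at most polynomial growth in $|\mathbf{p}|$; this is the setting in which the analytic-vector estimates are cleanest.

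Second, as candidate analytic vectors I would take Hermite-type functions $\varphi(\mathbf{p})=P(\mathbf{p})\,e^{-\alpha|\mathbf{p}|^{2}}$ with $P$ polynomial and $\alpha>0$. Such vectors are dense in $L^{2}(\R^{3},d^{3}\mathbf{p})$ and are closed under both multiplication by polynomials and differentiation. The core step is the inductive bound $\|\tilde X_\mu^{n}\varphi\|\le C^{n}n!$ for some $C=C(\varphi)$, which would follow from the observation that each application of $\tilde X_\mu$ raises both the polynomial weight and the order of differentiation by at most one, combined with the standard $L^{2}$-bounds of the Hermite calculus. Nelson's theorem would then deliver essential self-adjointness of $\tilde X_\mu$, and undoing the unitary transformation transfers the result back to $X_\mu$ on $\mathscr{H}_{1}^{+}$.

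The main obstacle is the combinatorial bookkeeping of the iterated commutators between multiplication by the coefficient functions and the differentiations in $\tilde X_\mu^{n}\varphi$; the smooth but non-polynomial appearance of $\omega_{\mathbf{p}}$ and its inverse powers makes the polynomial degree count subtle, and one must verify that fractional powers of $1+|\mathbf{p}|^{2}$ do not spoil the $n!$-growth. A cleaner alternative I would pursue in parallel uses the fact that on $\mathscr{H}_{1}^{+}$ the Casimir $M^{2}$ acts as the scalar $m^{2}$ (cf.\ Lemma~\ref{l13}), so $X_\mu=\tfrac{1}{2m^{2}}(J_{\mu\nu}P^{\nu}+P^{\nu}J_{\mu\nu})$ belongs to the enveloping algebra of the Poincar\'e Lie algebra; Nelson's theorem for unitary representations of Lie groups then identifies the analytic vectors of the Nelson Laplacian as a common core for every symmetric element of this enveloping algebra, and $\mathscr{S}(\R^{3})$ is well known to sit inside the analytic-vector domain of the one-particle Wigner representation on $\mathscr{H}_{1}^{+}$.
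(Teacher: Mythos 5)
Your strategy is the same one the paper uses --- Nelson's analytic-vector theorem \cite[Theorem X.39]{RS2} applied to a total set of Hermite-type vectors --- so everything hinges on the inductive bound $\|X_\mu^{\,n}\varphi\|\le C^n n!$, and this is precisely the step that fails, for the reason you yourself flag but do not resolve. Your premise that ``each application of $\tilde X_\mu$ raises the polynomial weight by at most one'' is false for $\mu=0$: the derivative term of $X_0$ is $\tfrac{1}{m^2}\,\omega_{\mathbf p}\,p^k\partial_k$, whose coefficient grows like $|\mathbf p|^2$. Acting $n$ times on $\varphi=e^{-\alpha|\mathbf p|^2}$ produces a leading contribution $(-2\alpha)^n\,\omega_{\mathbf p}^{\,n}|\mathbf p|^{2n}\varphi$ of effective degree $3n$, and $\|(1+|\mathbf p|)^{3n}e^{-\alpha|\mathbf p|^2}\|\sim C^n\sqrt{(3n)!}\sim C^n (n!)^{3/2}$, which exceeds $C^n n!$ by a factor growing like $n^{n/2}$. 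So Gaussian/Hermite vectors are not analytic vectors for $X_0$, and no choice of constants repairs the induction. The paper's own proof is incomplete at exactly the same point: the assertion that the Schwartz space furnishes a total set of analytic vectors is stated, not proved, and the quadratic growth of the coefficient of $\partial_k$ is what blocks it.

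The obstruction is structural rather than combinatorial. Restrict $X_0=-\tfrac{i}{m^2}\omega(r)\bigl(\tfrac32+r\partial_r\bigr)$ to the rotation-invariant sector $L^2\bigl((0,\infty),r^2dr/(2\omega)\bigr)$, which $X_0$ preserves. The equation $X_0^*\psi=z\psi$ has the one-dimensional solution space spanned by $\psi_z(r)=r^{-3/2}\exp\bigl(izm^2\int_1^r ds/(s\,\omega(s))\bigr)$; since $\int^\infty ds/(s\omega)$ converges while $\int_0 ds/(s\omega)$ diverges like $m^{-1}\log r$, one checks that $\psi_z$ is square-integrable (and lies in the domain of the adjoint) for $\Im z$ of one sign and not the other. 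Hence the deficiency indices of $X_0$ on $\mathscr S(\R^3)$ are $(0,1)$ (or $(1,0)$, depending on sign conventions): the operator is not essentially self-adjoint --- the underlying reason being that the flow of the vector field $\tfrac{1}{m^2}\omega\, r\partial_r$, with $\dot r\sim r^2$, escapes to $r=\infty$ in finite time. Consequently no total set of analytic vectors can exist for $X_0$, so neither your argument nor the paper's can be completed. Your ``cleaner alternative'' does not rescue this: $X_\mu$ is a \emph{degree-two} element of the enveloping algebra, and analytic vectors of the Nelson Laplacian yield only $\|Y^n v\|\le C^n(2n)!^{1/1}\sim C^n(n!)^2$ for such $Y$, i.e.\ Gevrey-type rather than analytic growth; Nelson's Lie-group theorem guarantees essential self-adjointness only for Lie-algebra elements themselves, not for general symmetric quadratic elements.
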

\begin{proof}
The  operator $X_{\mu}$  is build out of real-valued  polynomials of the multiplication and differentiation operator and the function valued coefficients of the operator, in particular the zero component, are continuously differentiable for all $\mathbf{p}\in\R^{3}$. Since,   the Schwartz space is analytic w.r.t. the representations of infinitesimal generators of the Heisenberg-Weyl group, it follows that the symmetric operator $X_{\mu}$ (see former Theorem) has a total set of analytic vectors from which essential self-adjointness follows, \cite[Theorem X.39]{RS2}. 
\end{proof}  
In the context of relativistic coordinate and conjugate operators, \textbf{the}  coordinate operator in relativistic QFT, is the Newton-Wigner-Pryce operator see \cite{NW49}, \cite{PR48}, \cite{SS} and \cite{Muc3}   and references therein.  For the one-particle case  (see  \cite{SS} and \cite[Lemma 3.3]{Muc3}) the operator is given by 
	\begin{equation}\label{NWP}
	(X_{j} \varphi)(\mathbf{p})=-i \left( \frac{p_j}{2\omega_{\mathbf{p}}^2}
	+   \frac{\partial}{\partial p^j } 
	\right)\varphi(\mathbf{p}),
	\end{equation}
By taking a closer look at the expression for the spatial part of $X$ we recognize  that the NWP-operator  is an essential part of the spatial Coleman-Mandula operator, i.e. 

	\begin{align*}(X_j\varphi)(\mathbf{p})&=	\frac{i}{m^2} 
	\bigg(
	\frac{3 \,p_{j}}{2 }  +  {p_jp^{k}} 	  \frac{\partial}{\partial p^k }- m^2\frac{\partial}{\partial p^j} \bigg)\varphi(\mathbf{p})\\&=
 	 \frac{i\,p_{j}}{2m^2} 
	\bigg(3+ 	\frac{ m^2\,}{  	\omega_{\textbf{p}}^2 }+  {2 p^{k}} 	  \frac{\partial}{\partial p^k }  \bigg)\varphi(\mathbf{p})+
(
	X_{j}^{NWP}\varphi)(\mathbf{p}).
	\end{align*}  
	This is an additional justification for the Coleman-Mandula operator to be called a generalized coordinate operator. In particular, it is the   additional terms to the   Newton-Wigner operator that regulate the breaking of covariance. \newline\newline
 Next, we calculate the adjoint action  w.r.t. the Poincar\'e group  on the Coleman-Mandula operator. 
\begin{theorem}\label{trafo}
	The adjoint action  w.r.t. proper orthochronous  Poincar\'e group $\mathscr{P}^{\uparrow}_{+}=\mathscr{L}^{\uparrow}_{+}\ltimes\mathbb{R}^4$  on the Coleman-Mandula operator is given by   
	 \begin{align}\label{et1}
 U(a, \Lambda)X_{\mu} U(a, \Lambda)^{-1}= (\Lambda^{T})_{\mu}^{\,\, \,\nu} \left( X_{\nu}+a_{\nu} -\frac{a^{\rho}}{M^2}P_{\rho}P_{\nu}\right).
	 \end{align} 
\end{theorem}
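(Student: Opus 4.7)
The plan is to push $U(a,\Lambda)$ through the symmetrized product in \eqref{cop} by invoking the standard Poincar\'e transformation rules for the generators $P_\mu$ and $J_{\mu\nu}$, and then exploiting the Casimir property of $M^2$ proved in Lemma \ref{l13} to handle the prefactor.

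First I would record the standard conjugation formulas
\begin{align*}
U(a,\Lambda)\,P_\nu\,U(a,\Lambda)^{-1} &= (\Lambda^T)_\nu^{\,\,\sigma}\,P_\sigma,\\
U(a,\Lambda)\,J_{\mu\nu}\,U(a,\Lambda)^{-1} &= (\Lambda^T)_\mu^{\,\,\rho}(\Lambda^T)_\nu^{\,\,\sigma}\bigl(J_{\rho\sigma}+a_\rho P_\sigma-a_\sigma P_\rho\bigr),
\end{align*}
in which the inhomogeneous $a$-piece in the $J$-rule is precisely what will generate the correction in \eqref{et1}. By Lemma \ref{l13} the mass operator $M^2$ commutes with every generator, hence $U(a,\Lambda)\,M^{-2}\,U(a,\Lambda)^{-1}=M^{-2}$, and the inverse-mass prefactor of \eqref{cop} can be pulled outside the conjugation without ordering worries.

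Next I would substitute these rules into
\[
U X_\mu U^{-1}=\tfrac{1}{2M^2}\bigl(U J_{\mu\nu}U^{-1}\cdot U P^\nu U^{-1} + U P^\nu U^{-1}\cdot U J_{\mu\nu}U^{-1}\bigr)
\]
and split the result into a homogeneous piece (containing only $J_{\rho\sigma}$) and an inhomogeneous piece linear in $a$. For the homogeneous piece the identity $\Lambda^T\eta\Lambda=\eta$ collapses one factor of $\Lambda^T$ from the $J$-rotation against the $\Lambda^T$ acting on $P^\nu$, leaving a single $(\Lambda^T)_\mu^{\,\,\nu}$ in front of the original symmetrized combination, i.e.\ $(\Lambda^T)_\mu^{\,\,\nu}X_\nu$, which reproduces the first term of \eqref{et1}. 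For the inhomogeneous piece, contracting $a_\rho P_\sigma-a_\sigma P_\rho$ with the rotated $P^\nu$ produces two Lorentz scalars: one contraction gives $P_\sigma P^\sigma=M^2$, which cancels the overall $M^{-2}$ and leaves the pure translation $a_\nu$, while the other gives $a\cdot P=a^\rho P_\rho$, which combines with the remaining $P_\nu$ to yield the correction $-(a^\rho/M^2)P_\rho P_\nu$. Reinstating the outer $(\Lambda^T)_\mu^{\,\,\nu}$ matches \eqref{et1} exactly.

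The main obstacle I expect is purely bookkeeping. The symmetrized ordering $J_{\mu\nu}P^\nu+P^\nu J_{\mu\nu}$ in \eqref{cop}, combined with the non-vanishing commutator $[J_{\mu\nu},P^\nu]\propto P_\mu$, means that the two terms in the symmetrization produce formally different ordering patterns of $P$'s and $J$'s after conjugation. One has to check that the ordering remnants cancel between the two terms, so that the final expression is exactly the right-hand side of \eqref{et1} with no residual commutator debris. Once the transformation rules above are in place, this reduces to a direct index calculation entirely analogous to those carried out in the proof of Lemma \ref{le1}.
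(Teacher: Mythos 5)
Your proposal is correct, but it takes a genuinely different route from the paper. The paper first establishes the Lorentz part by noting that $X_\mu$ satisfies the same commutator with $J_{\rho\sigma}$ as a covariant vector (Lemma \ref{le1}), and then handles the translation part by conjugating $X_\mu$ \emph{directly} with $U(a,\bone)$ via the Baker--Campbell--Hausdorff expansion: the series terminates after first order because $[P_\rho,X_\mu]=i(\eta_{\rho\mu}-P_\rho P_\mu/M^2)$ commutes with all momenta, giving $X_\mu+a_\mu-M^{-2}a^\rho P_\rho P_\mu$ immediately; the two pieces are then composed via $U(a,\bone)U(0,\Lambda)=U(a,\Lambda)$. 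You instead keep the group element whole and decompose the \emph{operator}: you conjugate the constituents $J_{\mu\nu}$, $P^\nu$ and $M^{-2}$ separately using their standard (inhomogeneous, for $J$) Poincar\'e transformation laws and the Casimir property of Lemma \ref{l13}, then reassemble. Both arguments rest on the same algebra, but yours trades the BCH termination argument for index bookkeeping, and in exchange does not need the precomputed commutators $[X_\mu,P_\nu]$ and $[J_{\rho\sigma},X_\mu]$ as inputs. One remark on the obstacle you flag: there are in fact no ``ordering remnants'' to cancel, because the inhomogeneous pieces $a_\rho P_\sigma-a_\sigma P_\rho$ consist solely of mutually commuting momenta, so the two terms of the symmetrization contribute identically ($a_\rho P_\sigma P^\sigma = P^\sigma a_\rho P_\sigma$, etc.), and the homogeneous pieces reassemble into the symmetrized combination defining $X_\rho$ exactly; the check you defer is therefore immediate rather than delicate.
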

\begin{proof}
	First we investigate the action of the Lorentz transformations on the Coleman-Mandula operator. From the algebra it is clear that the conjugate operator transforms as, 
\begin{align*}
U(0,\Lambda)\,X_{\mu}U(0,\Lambda)^{-1}=
( \Lambda^{T} X)_{\mu} 
  ,
\end{align*}
 where we used the algebra between the momentum and the Lorentz generators (see \cite[Chapter 7, Equation 68]{Sch}) and replaced the symbol $P$ by $X$. The next adjoint action is a bit more involved, 
 \begin{align*}
 U(a,\bone)\,X_{\mu}U(a,\bone)^{-1}&= X_{\mu}+ia^{\rho} [P_{\rho},X_{\mu}]+\frac{i^2}{2!}a^{\rho}a^{\sigma} \underbrace{[P_{\rho},[P_{\sigma},X_{\mu}]]+\cdots}_{=0}\\&= 
 X_{\mu}+a_{\mu}  -\frac{1}{M^2}a^{\rho} P_{\rho}
 P_{\mu}
 ,
 \end{align*}
 where in the last lines the Baker-Campbell Hausdorff formula and the explicit algebra was used. Since a general Poincar\'e transformation can be written as the product, 
 \begin{align*}
 U(a,\bone) U(0,\Lambda)=U(a, \Lambda),
 \end{align*}
the proof is concluded.  
\end{proof}
 Hence, for particles where the mass is larger then the spatial velocity, the last term in the transformation vanishes and we obtain the well-known transformational behavior of the coordinate operator in quantum mechanics.   \\\\
 In this letter we obtained a generalized coordinate operator that generates a Snyder-like spacetime (see Equation (\ref{ss})), however the advantages of this, as we called it, physically realistic operator is that the commutation relations do not break the translation invariance as is usually done by the commutation relations of the Snyder spacetime. This is  seen by translating the coordinate operators (and assuming they translate covariant) on the left hand side of the equation,  
 \begin{align*}U(a,\bone)\,[X_{\mu},X_{\nu}]U(a,\bone)^{-1} =[X_{\mu}+a_{\mu},X_{\nu}+a_{\nu}]
 =[X_{\mu},X_{\nu}]
 \end{align*}
 where one uses the fact that the constant vectors commute. However, since the right hand side, i.e. the side depending on the Lorentz generator are not translation invariant an inconsistency appears. By using the new commutation relations of the Coleman-Mandula operator with the momentum we prove next that this problem is absent in our case.
 \begin{theorem}
 	The Snyder spacetime generated by the Coleman-Mandula operator (see Equation \ref{ss}) transforms covariant under the actions of the proper orthochronous  Poincar\'e group $\mathscr{P}^{\uparrow}_{+}$. 
 \end{theorem}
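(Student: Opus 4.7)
My plan is to reduce Poincaré covariance to two cases: pure Lorentz transformations and pure translations, since every element of $\mathscr{P}^{\uparrow}_{+}$ factors as $U(a,\Lambda)=U(a,\bone)U(0,\Lambda)$ and covariance of the commutator (\ref{ss}) is preserved under composition. The Lorentz case is essentially immediate: by Theorem \ref{trafo} with $a=0$, $X_\mu$ transforms as a proper vector, so the left-hand side picks up two factors of $\Lambda^{T}$, while on the right-hand side Lemma \ref{l13} tells us $M^2$ is Casimir (hence invariant) and the tensor transformation of $J_{\mu\nu}$ follows from the Poincaré algebra (\ref{crpj1}). Matching indices gives $U(0,\Lambda)[X_\mu,X_\nu]U(0,\Lambda)^{-1}=i\,U(0,\Lambda)J_{\mu\nu}U(0,\Lambda)^{-1}/M^2$.

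The translation case is the substantive one. I would set $X'_\mu := U(a,\bone)X_\mu U(a,\bone)^{-1} = X_\mu + a_\mu - a^\rho P_\rho P_\mu/M^2$ by Theorem \ref{trafo}, and compute $[X'_\mu,X'_\nu]$ directly. The $c$-number terms $a_\mu,a_\nu$ drop out, and the fully $P$-dependent cross term $[a^\rho P_\rho P_\mu/M^2,\,a^\sigma P_\sigma P_\nu/M^2]$ vanishes because momenta commute and $M^2$ commutes with every $P_\alpha$. So the only surviving pieces are $[X_\mu,X_\nu]$ itself, plus the two mixed commutators
\begin{equation*}
-\tfrac{1}{M^2}\bigl[X_\mu,\,a^\sigma P_\sigma P_\nu\bigr]-\tfrac{1}{M^2}\bigl[a^\rho P_\rho P_\mu,\,X_\nu\bigr],
\end{equation*}
which I would expand using Leibniz and the modified canonical commutator (\ref{cr1}) from Lemma 1.1, together with the fact (Lemma \ref{l13}) that $M^2$ commutes with $X_\alpha$ so the denominators can be pulled outside.

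The crucial bookkeeping is that each mixed commutator produces three types of terms: a piece proportional to $a_\mu P_\nu/M^2$ (resp.\ $a_\nu P_\mu/M^2$), a piece proportional to $(a\!\cdot\!P)\eta_{\mu\nu}/M^2$, and a piece proportional to $(a\!\cdot\!P)P_\mu P_\nu/M^4$ coming from the $-P_\mu P_\nu/M^2$ correction in (\ref{cr1}). Adding the $\mu\leftrightarrow\nu$ pair, the $(a\!\cdot\!P)\eta_{\mu\nu}$ contributions cancel by symmetry and the $(a\!\cdot\!P)P_\mu P_\nu$ contributions cancel between the two terms, leaving precisely the antisymmetric combination $\pm i(a_\mu P_\nu - a_\nu P_\mu)/M^2$. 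Comparing this with the separately computable translation of the right-hand side, $U(a,\bone)J_{\mu\nu}U(a,\bone)^{-1}=J_{\mu\nu}+a_\mu P_\nu - a_\nu P_\mu$ (obtained from Baker--Campbell--Hausdorff, with the series truncating after one term since $[P_\rho,[P_\sigma,J_{\mu\nu}]]$ involves only momenta and thus vanishes), one reads off equality of both sides divided by the invariant $M^2$.

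The main obstacle is the cancellation of the $(a\!\cdot\!P)\eta_{\mu\nu}$ and $(a\!\cdot\!P)P_\mu P_\nu/M^2$ structures: these arise precisely because the deformation term $-P_\mu P_\nu/M^2$ in the modified Heisenberg relation (\ref{cr1}) is of the same order in $1/M^2$ as the momentum-dependent correction to the translation of $X_\mu$. The proof of covariance thus hinges on the fact that the \emph{same} momentum-dependent deformation appears coherently in both the $[X,P]$ commutator and in the translation law for $X$---which is the structural reason why the Coleman--Mandula coordinate operator evades the translation-breaking objection raised just before the theorem, whereas an ad hoc Snyder algebra does not.
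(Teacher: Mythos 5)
Your proposal is correct and follows essentially the same route as the paper: the paper likewise transforms $X_\mu$ via Theorem \ref{trafo}, expands the mixed commutators $[X_\sigma,P_\beta P_\lambda]$ using the deformed Heisenberg relation (\ref{cr1}), and observes that the symmetric $(a\cdot P)\eta$ and $(a\cdot P)PP$ pieces cancel so that only the antisymmetric combination $a_\lambda P_\sigma-a_\sigma P_\lambda$ survives, matching the translation of $J_{\sigma\lambda}/M^2$. The only cosmetic difference is that the paper carries the Lorentz factors $(\Lambda^{T})_{\mu}^{\,\,\sigma}(\Lambda^{T})_{\nu}^{\,\,\lambda}$ through one combined computation rather than splitting off the pure-Lorentz case first.
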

 \begin{proof}
 	We begin by using the former theorem to transform the left hand side of the commutator relations,  with $(a,\Lambda)\in\mathscr{P}^{\uparrow}_{+}$
 	\begin{align*}&U(a,\Lambda)\,[X_{\mu},X_{\nu}]U(a, \Lambda)^{-1} =  
 (\Lambda^{T})_{\mu}^{\,\,\,\,\sigma} (\Lambda^{T})_{\nu}^{\,\,\,\,\lambda}[
 	X_{\sigma}  -a^{\rho}\frac{1}{M^2}P_{\rho}P_{\sigma},
 	X_{\lambda}  -a^{\beta}\frac{1}{M^2}P_{\beta}P_{\lambda}] 	\\&= 	\frac{1}{M^2}
 		(\Lambda^{T})_{\mu}^{\,\,\,\,\sigma} (\Lambda^{T})_{\nu}^{\,\,\,\,\lambda}
 	\left( 
 		iJ_{\sigma\lambda}-a^{\beta}	[	X_{\sigma},P_{\beta}P_{\lambda}
 		]+\sigma\leftrightarrow\lambda \right)
 		\\&= 	\frac{i}{M^2}
 	(\Lambda^{T})_{\mu}^{\,\,\,\,\sigma} (\Lambda^{T})_{\nu}^{\,\,\,\,\lambda}
 		\left( 
 		J_{\sigma\lambda}+a_{\lambda}P_{\sigma}-a_{\sigma}P_{\lambda}  \right),
 	\end{align*}
 	where in addition to the former theorem we used the commutation relations of the momentum with the Coleman-Mandula operator. Now the right hand side of the commutator equation (\ref{ss}) transforms exactly in this fashion since it is merely the generator of Lorentz boosts and rotations. 
 \end{proof}
 The Snyder space that is generated by our operators is therefore not only Lorentz-covariant, as is the standard Snyder spacetime, but in addition it is covariant under translations. This is due to the extra term that appears   in the extended Heisenberg commutator relations  (\ref{cr1}). Hence, our special case of a Snyder algebra transforms covariant under Poincar\'e transformations, while the usual Snyder algebra given in  \cite{SN}  is not covariant under the translations as was   pointed out in \cite{Yang47}. Therefore, our algebra is a Poincar\'e invariant alternative to the translational non-covariant algebra of Snyder. 
  
\subsection{The soccer-ball problem solved}
Theories that deal with deformation of special relativity or deformations of  relativistic momentum space, induce modifications to the energy-momentum relations, see for example \cite[Formula 1]{AC}. Investigations   show  that these modifications lead to serious problems in the treatment of the multi-particle regime and they are even more serious in the macroscopic regime. This problem is known as the \textit{soccer-ball problem},  \cite{AC2}.  However, since our spacetime is generated such that it leaves the energy-momentum relations untouched, we do not encounter the soccer-ball problem that usually haunts such theories.  \newpage
\section{Extending Snyder-Spacetime by Deformation}
Recent progress in deformation theory, provided us with a mathematical well-defined and physically well motivated way of obtaining a quantum spacetime, i.e. a noncommutative spacetime, from a given set of coordinate operators. In \cite{AA} and \cite{Muc1} the noncommutative Moyal-Weyl space was obtained by deformation quantization of the  ordinary quantum mechanical coordinate operator, by using as generators of the deformation the translation operators, i.e. the momentum operators. The result was a  noncommutative space that is connected to  the well-known example of the Landau-quantization. The effect was reformulated in the language of deformation quantization and in addition applied to the gravito-magnetic field resulting in the so called gravito-magnetic Landau-quantization and the gravito-magnetic Zeemann-effect.\\\\
  Following the quantum mechanical example of the Landau quantization in the context of deformation quantization we deform in this section   the Coleman-Mandula operator in order to see additional corrections coming from a non-vanishing constant Moyal-Weyl. In particular the question that is answered in this context is the following: Can we obtain a noncommutative  spacetime in a addition to the Snyder type one, by a strict deformation quantization? And if so, will there be terms resembling additional relativistic corrections like in example for \cite{Muc3}? To answer these  questions, we need to evaluate the   commutator of  the deformed Coleman-Mandula operators. The generators of deformation are in this context (as in  \cite{AA}, \cite{Muc1} and \cite{Muc3}) the momentum operators.   Moreover, in order to preform the deformation quantization we need to prove that the deformation of these unbounded operators is well-defined. The more physical interested reader can skip the following subsection.
\subsection{Mathematical Properties of the Deformation}
We start by giving the definition  of  the deformation known by warped convolutions, (see \cite{RI} and \cite{BLS}), 
 \begin{definition}\label{defwca}
 	Let $\Theta$ be a skew-symmetric matrix on $\mathbb{R}^{d}$ and   $\mathcal{D}$ be the dense and  stable subspace of vectors in $\mathcal{H}$, which transform smoothly under the unitary operator $U$. Finally, let   $E$ be the spectral resolution of the   operator $U$. Then, the warped convolutions  of an operator $A$, defined on  $\mathcal{D}\subset\mathcal{H}$ and   denoted by $A^{\Theta}$, is  defined according to
 	\begin{equation}\label{WC1}
 	A^{\Theta} :=\int dE(x)\,\alpha_{\Theta x}(A) = (2\pi)^{-d}
 	\lim_{\epsilon\rightarrow 0}
 	\iint dx\, dy \,\chi(\epsilon x,\epsilon y )\,e^{-ixy}\, \alpha_{\Theta x}(A)\, U(y) ,
 	\end{equation}
 	where $\chi \in\mathscr{S}(\mathbb{R}^d\times\mathbb{R}^d)$ with $\chi(0,0)=1$ and $\alpha$ denotes the $\mathbb{R}^d$-action, i.e.
 	$$
 	\alpha_p(A):=U(p)\, A\,U(p)^{-1}\qquad  \forall p\in\mathbb{R}^{d}.
 	$$
 \end{definition} 
The former definition   is  rigorously defined  for a certain   bounded operators, see 
\cite{BLS}. However, since we are dealing with unbounded operators we   use the results of \cite{LW} in order to prove that the deformation of the Coleman-Mandula operator is well-defined. Hence, let us
consider the deformed operator $A^{{\Theta}}$ as follows
\begin{align*}
\langle \Psi,A^{{\Theta }}\Phi\rangle&=
(2\pi)^{-d}
\lim_{\epsilon\rightarrow 0}
\iint  \, dx \,  dy \, e^{-ixy}  \, \chi(\epsilon x,\epsilon
y) {\langle \Psi, 
	U(y)\alpha_{\Theta x}(A)\Phi\rangle} \nonumber \\&
=:
(2\pi)^{-d}
\lim_{\epsilon\rightarrow 0}
\iint  \, dx\,  dy \, e^{-ixy}  \, \chi(\epsilon x,\epsilon
y) \, b(x,y)
\end{align*}
for $\Psi, \Phi \in \mathcal{D}^{\infty}(A):=\{
\varphi \in\mathcal{D}(A)| U(x)\varphi \in \mathcal{D}(A)$  is smooth in $\|\cdot\|_{\mathcal{H}}\}$ and 	   $\chi \in\mathscr{S}(\mathbb{R}^d\times\mathbb{R}^d)$ with $\chi(0,0)=1$. Note that   the scalar product is   w.r.t. $\mathcal{H}$.
The expression is well-defined if $b(x,y)$ is a symbol which is given in the following definition (\cite[Section 7.8, Definition 7.8.1]{H}).

 	\begin{definition}
 		Let $r$, $\rho$, $\delta$, be real numbers with $0<\rho\leq1$ and $0\leq \delta <1$. Then we
 		denote by $S^{r}_{\rho,\delta}(X\times \mathbb{R}^d)$, the set of all $b\in C^{\infty}(X\times
 		\mathbb{R}^d)$ such that for every compact set $K\subset X$ and all  $\gamma, \kappa$  the estimate 
 		\begin{equation*}
 		|\partial_{x}^{\gamma}\partial_{y}^{\kappa}b(x,y)|\leq
 		C_{\gamma,\kappa,K}(1+|x|)^{r-\rho|\gamma|+\delta|\kappa|},\qquad x\in K, \,\, y\in
 		\mathbb{R}^d,  
 		\end{equation*}
 		is valid for some constant $C_{\gamma,\kappa,K}$. The elements  $S^{r}_{\rho,\delta}$ are called
 		symbols of order $r$ and type $\rho,\delta$. Note that  $\gamma, \kappa$ are multi-indices and  $|\gamma|$, $|\kappa|$ are the corresponding sums of the index-components. 
 	\end{definition}

 	\begin{lemma}\label{lfhs}
 		Assume that the derivatives of the adjoint action of $A$ w.r.t. the unitary operator $U$ are polynomially bounded on vectors in $\mathcal{D}^{\infty}(A)$, i.e.
 		\begin{equation}\label{pb}
 		\|\partial_{x}^{\gamma}\alpha_{\Theta x}(A)\Phi\| \leq  C_{\gamma}(1+|x|)^{r-\rho|\gamma|},\qquad \forall \Phi \in
 		\mathcal{D}^{\infty}(A).
 		\end{equation}
 		Then, $b(x,y)$ belongs to the symbol class $S^{r}_{\rho,0}$ for $\Psi, \Phi \in
 		\mathcal{D}^{\infty}(A)$ and therefore the deformation  of the unbounded operator $A$ is given as a well-defined
 		oscillatory integral.
 	\end{lemma}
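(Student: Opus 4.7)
The plan is to verify directly that the function $b(x,y)=\langle\Psi,U(y)\alpha_{\Theta x}(A)\Phi\rangle$ satisfies the H\"ormander symbol estimates defining $S^{r}_{\rho,0}$, and then invoke the standard oscillatory integral theory (\cite[Section 7.8]{H}) to conclude that the regularized expression converges as $\epsilon\to 0$ independently of the cutoff $\chi$. The starting estimate is just Cauchy--Schwarz together with unitarity of $U(y)$, giving
\begin{equation*}
|b(x,y)|\leq \|\Psi\|\,\|\alpha_{\Theta x}(A)\Phi\|\leq C_{0}\,(1+|x|)^{r},
\end{equation*}
which is the $\gamma=\kappa=0$ case of the claim, using the polynomial bound (\ref{pb}) with $\gamma=0$.

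Next I would handle $x$- and $y$-derivatives separately. Since $U(y)$ is $x$-independent and the inner product is continuous, $\partial_{x}^{\gamma}$ passes through and acts only on $\alpha_{\Theta x}(A)\Phi$, so
\begin{equation*}
|\partial_{x}^{\gamma}b(x,y)|\leq \|\Psi\|\,\|\partial_{x}^{\gamma}\alpha_{\Theta x}(A)\Phi\|\leq C_{\gamma}\,\|\Psi\|\,(1+|x|)^{r-\rho|\gamma|},
\end{equation*}
again by hypothesis (\ref{pb}). For the $y$-derivatives I would exploit the fact that $\Psi\in\mathcal{D}^{\infty}(A)$ transforms smoothly under $U$ and therefore lies in the common domain of arbitrary monomials in the self-adjoint infinitesimal generators $P_{\mu}$ of the unitary representation. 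By Stone's theorem the derivative $\partial_{y}^{\kappa}U(y)$ factors as $U(y)(iP)^{\kappa}$ on smooth vectors, and shifting it to the left slot (using self-adjointness of each $P_{\mu}$) yields
\begin{equation*}
\partial_{y}^{\kappa}b(x,y)=\bigl\langle(-iP)^{\kappa}\Psi,\,U(y)\alpha_{\Theta x}(A)\Phi\bigr\rangle,
\end{equation*}
which by Cauchy--Schwarz and unitarity is bounded by $\|(-iP)^{\kappa}\Psi\|\,(1+|x|)^{r}$, with no growth in $|\kappa|$ on the $x$-side. Combining both estimates gives
\begin{equation*}
|\partial_{x}^{\gamma}\partial_{y}^{\kappa}b(x,y)|\leq C_{\gamma,\kappa}(1+|x|)^{r-\rho|\gamma|+0\cdot|\kappa|},
\end{equation*}
which is precisely the defining bound of $S^{r}_{\rho,0}$. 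The vanishing of $\delta$ is crucial and is exactly what is bought by the smoothness assumption built into $\mathcal{D}^{\infty}(A)$.

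Having certified $b\in S^{r}_{\rho,0}$ on compact subsets in $x$, the H\"ormander machinery for oscillatory integrals applies: the regularized double integral $\iint dx\,dy\,e^{-ixy}\chi(\epsilon x,\epsilon y)\,b(x,y)$ converges to a finite limit as $\epsilon\to 0$, independent of the choice of $\chi\in\mathscr{S}(\mathbb{R}^{d}\times\mathbb{R}^{d})$ with $\chi(0,0)=1$, which yields the well-defined matrix element $\langle\Psi,A^{\Theta}\Phi\rangle$ on $\mathcal{D}^{\infty}(A)$. The main obstacle in the argument is not the symbol estimate itself but the careful justification for interchanging derivatives with the inner product and, above all, for transferring $\partial_{y}^{\kappa}U(y)$ onto $\Psi$; this is the step that genuinely uses smoothness of $\Psi$ under $U$ rather than mere smoothness of $\alpha_{\Theta x}(A)\Phi$, and it is what forces the domain $\mathcal{D}^{\infty}(A)$ into the formulation.
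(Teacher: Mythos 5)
Your argument is correct and is essentially the proof the paper relies on: the paper itself gives no details, deferring entirely to \cite{Muc5}, \cite{LW} and \cite[Theorem 1]{AA}, and the standard argument in those references is exactly yours — Cauchy--Schwarz plus unitarity for the zeroth-order bound, the hypothesis (\ref{pb}) for the $x$-derivatives, and shifting $\partial_{y}^{\kappa}U(y)=U(y)(iP)^{\kappa}$ onto the smooth vector $\Psi$ to get $\delta=0$, after which the H\"ormander oscillatory-integral machinery of \cite[Section 7.8]{H} applies. Your closing remark correctly identifies the only genuinely delicate point, namely that transferring the generators onto $\Psi$ is what requires $\Psi\in\mathcal{D}^{\infty}(A)$ and not merely the bound on $\alpha_{\Theta x}(A)\Phi$.
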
 
 	\begin{proof} For proof see \cite{Muc5} or    \cite{LW} and \cite[Theorem 1]{AA}. 
 	\end{proof}
Basically, what the former Lemma   implies is the following. If the deformed operator is polynomially bounded w.r.t. the unitary action the deformation, given by oscillatory integrals, is well-defined. For the following deformation we work in four dimensions which restricts the operators to be defined on vectors in  $\mathscr{S}(\mathbb{R}^3)$.

\begin{lemma}\label{lfhs}
Derivatives of the adjoint action of $X_{\mu}$, for all $\mu$, w.r.t. the unitary operator $U$, generated by the infinitesimal generators of the translations,  are polynomially bounded on vectors in $\mathscr{S}(\mathbb{R}^3)$, i.e.
	\begin{equation}\label{pb}
	\|\partial_{x}^{\gamma}\alpha_{\Theta x}(X_{\mu})\Phi\| \leq  C_{\gamma}(1+|x|)^{1-|\gamma|}, \qquad \forall \Phi \in \mathscr{S}(\mathbb{R}^3).
	\end{equation}
	Therefore the  warped convolution, i.e. the deformation, of this operator is well-defined. 
\end{lemma}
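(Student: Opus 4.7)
The plan is to reduce the whole estimate to Theorem \ref{trafo}, since that theorem already computes the translational adjoint action in closed form. Specialising to $\Lambda=\bone$ and writing $a=\Theta x$ gives
\begin{equation*}
\alpha_{\Theta x}(X_{\mu})=U(\Theta x)X_{\mu}U(\Theta x)^{-1}=X_{\mu}+(\Theta x)_{\mu}-\frac{(\Theta x)^{\rho}}{M^{2}}P_{\rho}P_{\mu},
\end{equation*}
so the translational orbit of $X_{\mu}$ is an \emph{affine} function of $x$ with coefficients that are operators independent of $x$. This closed form is the whole point; once we have it the symbol estimate becomes almost immediate.

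The next step is to differentiate and inspect the three regimes. For $|\gamma|=0$ the triangle inequality gives
\begin{equation*}
\|\alpha_{\Theta x}(X_{\mu})\Phi\|\le \|X_{\mu}\Phi\|+\|\Theta\||x|\,\|\Phi\|+\|\Theta\||x|\,\Bigl\|\frac{P_{\rho}P_{\mu}}{M^{2}}\Phi\Bigr\|\le C_{0}(1+|x|),
\end{equation*}
which is exactly the bound $(1+|x|)^{1-|\gamma|}$ with $|\gamma|=0$. For $|\gamma|=1$, differentiating once in $x$ kills $X_\mu$ and produces the $x$\nobreakdash-independent operator $\Theta_{\mu}^{\,\,\nu}\bone-\Theta^{\rho\nu}P_{\rho}P_{\mu}/M^{2}$, whose norm on $\Phi$ is a finite constant $C_{1}$, again matching $(1+|x|)^{0}$. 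For $|\gamma|\ge 2$ the affine dependence makes the derivative vanish, so the bound is trivial. Combining the three cases yields the claimed estimate with $r=1$ and $\rho=1$.

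To justify that the constants $C_{\gamma}$ are finite for $\Phi\in\mathscr{S}(\mathbb{R}^{3})$, I would invoke the one-particle representation worked out in the previous theorems. There $M^{2}$ acts as the scalar $m^{2}$, $P_{\mu}$ acts as multiplication by $p_{\mu}=(\omega_{\mathbf{p}},\mathbf{p})$ against the Lorentz-invariant measure, and $X_{\mu}$ is a first-order differential operator in $\mathbf{p}$ with $C^{\infty}$ polynomially-bounded coefficients; each of these operators maps $\mathscr{S}(\mathbb{R}^{3})$ into itself, so every norm appearing above is finite. With the symbol estimate \eqref{pb} established, the earlier Lemma (applied with $r=1$, $\rho=1$, $\delta=0$) concludes that the oscillatory integral defining $X_{\mu}^{\Theta}$ converges on $\mathscr{S}(\mathbb{R}^{3})$.

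The calculation itself is routine; the only genuine point of care is the polynomial boundedness of $P_{\rho}P_{\mu}/M^{2}$ on Schwartz vectors. On a fixed one-particle mass shell this is immediate because $M^{2}$ is a nonzero scalar, but if one wanted the statement on the full Fock space one would have to argue via the spectral resolution of $M^{2}$ away from zero, or restrict to the dense subspace of smooth positive-mass vectors. This subtlety aside, the closed affine form supplied by Theorem \ref{trafo} makes the estimate essentially automatic.
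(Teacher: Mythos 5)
Your proposal is correct and follows essentially the same route as the paper: both use the closed affine form of $\alpha_{\Theta x}(X_{\mu})$ supplied by Theorem \ref{trafo}, then treat the cases $|\gamma|=0$, $|\gamma|=1$, and $|\gamma|\ge 2$ by the triangle inequality and the vanishing of higher derivatives. Your added justification that the constants are finite because $X_{\mu}$, $P_{\mu}$, and $M^{-2}$ preserve $\mathscr{S}(\mathbb{R}^{3})$ in the one-particle representation is a welcome elaboration of what the paper states only as ``finite due to the choice of the domain.''
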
 
\begin{proof}In the following proof we use the explicit adjoint action of the Coleman-Mandula operator w.r.t. the translations see  
	\begin{align*} 
	\|\partial_{x}^{\gamma}\alpha_{\Theta x}(X_{\mu})\Phi\|&
	=	\|\partial_{x}^{\gamma}	( X_{\mu}+( \Theta x)_{\mu}  -\frac{( \Theta x)^{\rho}}{m^2} 
	P_{\mu}P_{\rho})\Phi\|\\&
	=	\|\partial_{x}^{\gamma}	( X_{\mu}+x_{\lambda}  P^{\lambda }_{\mu}) \Phi\| ,
	\end{align*}
	where the matrix $\Theta$ is a skew-symmetric matrix on $\R^4$ and we defined the matrix valued-operator $P^{\lambda }_{\mu}:=\Theta_{\mu}^{\lambda}-\Theta^{\rho\lambda}P_{\mu}P_{\rho}$. 
	For $\gamma=0$ we have,
	\begin{align*} 
		\| 	( X_{\mu}+x_{\lambda}  P^{\lambda }_{\mu}) \Phi\|&
 \leq \underbrace{	\| 	  X_{\mu}  \Phi\|}_{=:F_1}+\|  x_{\lambda}  P^{\lambda }_{\mu}  \Phi\|\\&
 \leq 	F_1+ |  x   |\underbrace{\| P^{\lambda }_{\mu}  \Phi\|}_{=:F_2} \leq
 C_0(1+ |  x   |)
	\end{align*}
where the constants $F_1,F_2$ introduced are finite due to the choice of the domain. From the  finiteness of the constants $F_1,F_2$ it follows that a finite constant $C_0$ can be found such that the inequality is fulfilled. For $\gamma=1$ the $x$-dependence vanishes, i.e. we have
	\begin{align*} 
	\| 	    P^{\lambda \gamma_1 }   \Phi\|&
	\leq C_1,
	\end{align*}
 and for higher $\gamma$'s the whole expression is zero. 
\end{proof}
\begin{lemma}
From the former Lemma it follows that the deformed Coleman-Mandula operator $X^{\Theta}_{\mu}$ is essential self-adjoint on the dense and stable domain $\mathscr{S}(\mathbb{R}^3)$. 
\end{lemma}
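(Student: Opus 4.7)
The plan is to reduce the essential self-adjointness of $X^{\Theta}_{\mu}$ to that of the undeformed $X_{\mu}$ (already established in Section~2) by exhibiting $X^{\Theta}_{\mu}$ in closed form, as a sum of $X_{\mu}$ and a symmetric multiplication operator in momentum space, and then repeating the analytic-vector argument verbatim.

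First I would insert the translation action from Theorem~\ref{trafo} into Definition~\ref{defwca}. With $\Lambda=\mathbf{1}$, Theorem~\ref{trafo} yields
\begin{equation*}
\alpha_{\Theta x}(X_{\mu}) \;=\; X_{\mu} + (\Theta x)_{\mu} - \frac{(\Theta x)^{\rho}}{M^{2}}\,P_{\rho}P_{\mu},
\end{equation*}
which is a polynomial of degree one in the integration variable $x$. Since the spectral resolution $E$ belongs to the generators of $U$, i.e.\ to the momenta $P_{\lambda}$, the functional calculus identity $\int dE(x)\,x_{\lambda}=P_{\lambda}$ applies term by term, and the oscillatory integral defined in the preceding lemma collapses to
\begin{equation*}
X^{\Theta}_{\mu} \;=\; X_{\mu} + \Theta_{\mu}{}^{\lambda}P_{\lambda} - \frac{\Theta^{\rho\lambda}}{M^{2}}\,P_{\lambda}P_{\rho}P_{\mu}.
\end{equation*}
The last term vanishes identically because $\Theta^{\rho\lambda}$ is skew-symmetric while $P_{\rho}P_{\lambda}=P_{\lambda}P_{\rho}$, so that on $\mathscr{S}(\mathbb{R}^{3})$ one has $X^{\Theta}_{\mu} = X_{\mu} + \Theta_{\mu}{}^{\lambda}P_{\lambda}$.

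The perturbation $\Theta_{\mu}{}^{\lambda}P_{\lambda}$ acts as multiplication by the real-valued, smooth, polynomially bounded function $\Theta_{\mu}{}^{\lambda}p_{\lambda}$ in the momentum representation; it is therefore symmetric on $\mathscr{S}(\mathbb{R}^{3})$ and leaves this domain invariant. Consequently $X^{\Theta}_{\mu}$ is a symmetric first-order differential operator with smooth, polynomially bounded coefficients of exactly the same structural type as $X_{\mu}$. Repeating the analytic-vector argument from the essential self-adjointness theorem for $X_{\mu}$---that $\mathscr{S}(\mathbb{R}^{3})$ provides a total set of analytic vectors for the Heisenberg--Weyl generators, followed by Theorem~X.39 of \cite{RS2}---then yields essential self-adjointness of $X^{\Theta}_{\mu}$ on the same domain.

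The main obstacle is the identification step in the first paragraph: one must justify passing from the oscillatory integral to the closed-form expression $X_{\mu} + \Theta_{\mu}{}^{\lambda}P_{\lambda}$. This is precisely what the polynomial boundedness of the preceding lemma enables, for it guarantees that each monomial in $x$ produces a convergent Fourier-type representation that reduces term by term to the spectral calculus of $P$. Once this closed form is in hand, the remainder of the proof is a direct repetition of the undeformed argument, and no new analytic input is required.
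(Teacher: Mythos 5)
Your proposal is correct in substance but follows a genuinely different route from the paper: the paper's proof of this lemma is a one-line deferral to the general result of \cite{Muc5}, which establishes essential self-adjointness for warped convolutions of arbitrary operators whose adjoint action satisfies the symbol estimates of the preceding lemma. You instead exploit the special feature of this deformation --- that $\alpha_{\Theta x}(X_{\mu})$ is affine in $x$, so the warped convolution collapses via the spectral calculus $\int dE(x)\,x_{\lambda}=P_{\lambda}$ to the closed form $X_{\mu}+(\Theta P)_{\mu}$, the cubic term dying by skew-symmetry of $\Theta$ --- and then reduce to the undeformed case by treating $(\Theta P)_{\mu}$ as a real, smooth, polynomially bounded multiplication operator preserving $\mathscr{S}(\mathbb{R}^3)$. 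This closed-form identity is exactly what the paper proves in the theorem immediately \emph{following} this lemma, so you are front-loading that computation; there is no circularity, since your derivation uses only Theorem \ref{trafo}, the definition of the warped convolution, and the well-definedness already established. What the paper's citation buys is generality (it does not need the deformation to collapse to a finite expression); what your argument buys is self-containedness and the observation that, for this particular operator, the oscillatory-integral machinery is dispensable once the closed form is in hand. Two caveats, both at the level of rigor the paper itself adopts: the component $\Theta_{\mu}{}^{0}P_{0}$ is multiplication by $\omega_{\mathbf{p}}=\sqrt{\mathbf{p}^{2}+m^{2}}$, which is not polynomial but only polynomially bounded (your phrasing covers this); and a vector analytic for $X_{\mu}$ and for $(\Theta P)_{\mu}$ separately is not automatically analytic for their sum, so the analytic-vector estimate of Theorem X.39 of \cite{RS2} should strictly be re-verified for the combined first-order operator rather than inherited --- though this is no weaker than the paper's own treatment of the undeformed $X_{\mu}$.
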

\begin{proof}
	The proof for general symbols can be found in \cite{Muc5}.
\end{proof}
\subsection{Deformed Snyder Space}
   In this section we first give the explicit result of the deformation on the Coleman-Mandula operator and afterwards we take the commutator for different components in order to obtain a new noncommutative spacetime. Moreover, we examine the transformation property of the operator under Poincar\'e -transformations
   \begin{theorem}
   	The explicit result of the deformation of the Coleman-Mandula operator, i.e.  $X_{\mu}^{\Theta}$, is given on vectors of  the  domain $\mathscr{S}(\mathbb{R}^3)$ as
   	\begin{align}
X_{\mu}^{\Theta}=X_{\mu}+( \Theta P)_{\mu} . 
   	\end{align}
   	The \textbf{noncommutative spacetime} that the deformed  Coleman-Mandula operator spans is given by,
   	   	\begin{align}
   	[X_{\mu}^{\Theta},X_{\nu}^{\Theta}]&=	i\frac{J_{\mu\nu}}{M^2} -2i   \Theta_{\mu\nu} - \frac{2i}{M^2} \left((\Theta P)_{\mu}P_{\nu}-(\Theta P)_{\nu}P_{\mu}\right). 
   	\end{align} In terms of the relativistic velocity operator $V_{\mu}=P_{\mu}P_{0}^{-1}$ the noncommutative spacetime reads,
   	 	\begin{align}
   	[X_{\mu}^{\Theta},X_{\nu}^{\Theta}]=
   	i\frac{J_{\mu\nu}}{M^2}-2i   \Theta_{\mu\nu} - 2i\frac{P_{0}^2}{M^2} \left((\Theta V)_{\mu}V_{\nu}-(\Theta V)_{\nu}V_{\mu}\right)
   	.
   	   	\end{align} 
   \end{theorem}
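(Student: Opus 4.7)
The plan is to obtain $X_\mu^\Theta$ first in closed form by carrying out the warped-convolution integral explicitly, and then to derive the commutator by a direct algebraic expansion that only uses Lemmas 1.1 and 1.2. The velocity formula will follow by a simple substitution.

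First, I would evaluate Definition 3.1 with $U$ the translation group, so $\alpha$ is precisely the adjoint action computed in Theorem 2.3. Because $E$ is the joint spectral resolution of $P$, any bounded measurable function of the spectral variable may be replaced by the corresponding function of $P$, collapsing the oscillatory integral to
\begin{equation*}
X_\mu^\Theta = \int dE(x)\,\alpha_{\Theta x}(X_\mu).
\end{equation*}
Theorem 2.3 with $\Lambda=\mathbf{1}$ and $a=\Theta x$ gives $\alpha_{\Theta x}(X_\mu) = X_\mu + (\Theta x)_\mu - M^{-2}(\Theta x)^\rho P_\rho P_\mu$. Substituting $x\mapsto P$ produces the candidate $X_\mu + (\Theta P)_\mu - M^{-2}(\Theta P)^\rho P_\rho P_\mu$, and the last term vanishes because $(\Theta P)^\rho P_\rho = \Theta^{\rho\sigma}P_\sigma P_\rho$ is an antisymmetric contraction of mutually commuting momenta. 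This yields the first claim, $X_\mu^\Theta = X_\mu + (\Theta P)_\mu$.

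For the commutator, I would expand bilinearly into four pieces. The term $[X_\mu,X_\nu]$ contributes $iJ_{\mu\nu}/M^2$ by Lemma 1.2, while $[(\Theta P)_\mu,(\Theta P)_\nu]$ vanishes since all components of $P$ commute. The two cross-terms $[X_\mu,(\Theta P)_\nu]$ and $[(\Theta P)_\mu,X_\nu]$ are evaluated via Lemma 1.1, $[X_\mu,P_\rho] = i(\eta_{\mu\rho} - P_\mu P_\rho/M^2)$; each cross-term produces a piece linear in $\Theta$ (assembling into the $-2i\Theta_{\mu\nu}$ contribution after using antisymmetry $\Theta_{\nu\mu}=-\Theta_{\mu\nu}$) and a piece bilinear in $\Theta$ and $P/M$ of the form $(\Theta P)_\mu P_\nu/M^2$. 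Collecting all four pieces and using $[P_\mu,P_\nu]=0$ to reorder the momentum factors delivers the stated noncommutative-spacetime commutator. The velocity form is then obtained by writing $P_\mu = P_0 V_\mu$, whence $(\Theta P)_\mu = P_0(\Theta V)_\mu$ and the bilinear piece acquires an overall $P_0^2/M^2$.

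The main technical obstacle is the first step: justifying that the spectral integral of an unbounded operator whose adjoint action already contains a quadratic momentum correction may be computed by the naive substitution $x\mapsto P$. The required convergence of the oscillatory integral has already been delivered by Lemma 3.2 via the polynomial bound on the symbol, so well-definedness on $\mathscr{S}(\mathbb{R}^3)$ is in hand; the only remaining care is that the cubic-in-$P$ remainder $(\Theta P)^\rho P_\rho P_\mu$ is genuinely zero as an operator (rather than an ill-defined formal expression), which follows from the mutual commutation of the momentum components combined with the antisymmetry of $\Theta$. After this point the computation is a purely algebraic manipulation using results already proven in Section 1.
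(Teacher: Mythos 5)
Your proposal is correct and takes essentially the same route as the paper's own proof: evaluate the spectral integral by substituting $x\mapsto P$ in the adjoint action computed for translations, discard the cubic remainder $(\Theta P)^{\rho}P_{\rho}P_{\mu}$ by skew-symmetry of $\Theta$, and then expand the commutator bilinearly using $[X_{\mu},P_{\nu}]$ and $[X_{\mu},X_{\nu}]$ from Section 1. One caveat: carrying out the cross-terms explicitly yields $+\frac{i}{M^{2}}\bigl((\Theta P)_{\mu}P_{\nu}-(\Theta P)_{\nu}P_{\mu}\bigr)$ rather than the coefficient $-\frac{2i}{M^{2}}$ printed in the theorem, a discrepancy already present between the paper's own proof and its stated result, so your ``collecting the pieces'' will not reproduce the displayed formula verbatim.
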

\begin{proof}
	The deformation is given by the well-defined spectral integral,
	   	\begin{align*}
	\int dE(x)\,\alpha_{\Theta x}(X_{\mu})&=
	\int dE(x)\, (	 X_{\mu}+( \Theta x)_{\mu}  -\frac{( \Theta x)^{\rho}}{M^2} 
	P_{\mu}P_{\rho}) \\&=
  X_{\mu}+( \Theta P)_{\mu}  -\frac{( \Theta P)^{\rho}}{M^2} 
	P_{\mu}P_{\rho} \\&= X_{\mu}+( \Theta P)_{\mu} ,
	   	\end{align*}
	where in the last lines we used the transformational behavior of the Coleman-Mandula operator under translations  (see Theorem \ref{trafo}) and the skew-symmetry of the deformation matrix $\Theta$. Next we calculate the commutator, 
	\begin{align*}
 [X_{\mu}^{\Theta},X_{\nu}^{\Theta}]&= [X_{\mu}+( \Theta P)_{\mu} , X_{\nu}+( \Theta P)_{\nu} ]\\&=
 [X_{\mu} , X_{\nu}  ]+ [X_{\mu} ,  ( \Theta P)_{\nu} ]+ [ ( \Theta P)_{\mu} , X_{\nu}  ]
 \\&=i\frac{J_{\mu\nu}}{M^2}+ \left(\Theta_{\nu}^{\,\,\lambda}[X_{\mu} , P_{\lambda} ] -\mu\leftrightarrow\nu\right)
  \\&= i\frac{J_{\mu\nu}}{M^2} +i \left(\left( \Theta_{\nu\mu} -\frac{(\Theta P)_{\nu}P_{\mu}}{M^2}\right)-\mu\leftrightarrow\nu\right).
	\end{align*}

\end{proof} 
The resulting spacetime displays interesting features. In addition to the Snyder-spacetime we obtained 
a Moyal-Weyl term, i.e. a constant non-commutativity and furthermore terms depending on the Moyal-Weyl Matrix $\Theta$ and quadratic terms in the momenta.\newline\newline Moreover, the spacetime that we obtained by deformation of a specific Snyder-operator can be understood as a covariant extension of the space-space non-commutativity obtained in \cite[Theorem 3.1]{Muc3}, i.e. 
	\begin{align*}
  [Q_{i}^{  \Theta},Q_{j}^{  \Theta}]=-2 i\left(
 \Theta_{0i} V_j /c-\Theta_{0j} V_i/c \right) 
 -2i\Theta_{ij} ,
	\end{align*}
	where $Q$ represents the Newton-Wigner and  $V$  is the spatial part of the relativistic velocity operator. Moreover, we called it a covariant extension of the former model since the objects we deformed are covariant. By comparing the expressions and assuming that $M$ is large compared to the spatial momentum the former spacetime leads to the latter.

\subsection{Covariance of the Deformed Snyder Space}
In this section we discuss the covariance properties of the deformed Snyder spacetime under transformations of the Poincar\'e-group. However, before, we give the Poincar\'e-transformed commutator relations we discuss the Moyal-Weyl spacetime and the apparent breaking of relativistic covariance. The Moyal-Weyl spacetime is a  noncommutative constant  spacetime and the algebra   is generated by self-adjoint operators (on some Hilbert space) $\hat{x}$ that satisfy
$$[\hat{x}_{\mu},\hat{x}_{\nu}]=i\Theta_{\mu\nu} .
$$
The implicit breaking of the covariance of the Moyal-Weyl spacetime can be proven as follows. Assume that the coordinate operators that generate the noncommutative spacetime transform covariantly, i.e. 
$\hat{x}_{\mu} \rightarrow (\Lambda^{T}\hat{x})_{\mu}$. By applying this transformation to both sides of the commutator relation of the Moyal-Weyl it follows that  for covariance to hold  we have $$\Theta=\Lambda^{T} \Theta \Lambda .$$ Since $\Theta$ is constant this can only hold if $\Theta=0$, or by giving this transformation an appropriate interpretation in the context of quantum field theory, \cite{GL1}, \cite{GL2} or quantum groups \cite{CH}. Hence, the apparent breaking is generated by the different transformation of the two-sides of the equation. In the present case   the equality respects the transformation since the deformed generalized coordinate operators were not merely imposed but constructed.  \\ \\ In order to   examine the transformation
properties of the deformed operators the following proposition is used, \cite{BLS}.  
\begin{proposition}\label{blsp1}
	Let $V$ be a unitary or anti-unitary operator on $\mathscr{H}$ such that
	$VU(x)V^{-1}=U(\Lambda x)$, $x\in
	\mathbb{R}^d$, for some invertible matrix $\Lambda$. Then,  for an operator $A$ we have
	\begin{equation}
	VA^{\Theta}V^{-1}=(VAV^{-1})^{\sigma \Lambda \Theta \Lambda^{T}},
	\end{equation}
	where $\Lambda^{T}$ is the transpose of $\Lambda$ w.r.t the chosen bilinear form, $\sigma=1$ if $V$ is
	unitary
	and $\sigma=-1$ if $V$ is anti-unitary.
\end{proposition}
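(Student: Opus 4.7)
The plan is to start from the spectral-integral form $A^{\Theta} = \int dE(x)\,\alpha_{\Theta x}(A)$ given in Definition \ref{defwca}, conjugate by $V$, and reduce everything to a change of spectral variable. I would split the argument into three pieces: conjugate the cocycle $\alpha_{\Theta x}(A)$, track how the spectral measure $E$ transforms, and combine the two to read off the warped convolution of $VAV^{-1}$ with the new deformation matrix.

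First I would handle the cocycle. Using only the intertwining relation $V U(y) V^{-1} = U(\Lambda y)$, which is valid whether $V$ is unitary or anti-unitary,
\begin{equation*}
V\,\alpha_{\Theta x}(A)\,V^{-1} \;=\; V U(\Theta x) V^{-1}\cdot VAV^{-1}\cdot V U(-\Theta x) V^{-1} \;=\; \alpha_{\Lambda \Theta x}(VAV^{-1}).
\end{equation*}
Second, I would determine how $VEV^{-1}$ relates to $E$. Writing $U(x) = \int e^{ix\cdot k}\,dE(k)$ and inserting it into both sides of the intertwining relation, the unitary case gives $\int e^{ix\cdot k}\,d(VEV^{-1})(k) = \int e^{ix\cdot \Lambda^T k}\,dE(k)$, while in the anti-unitary case pulling the complex conjugation through $V$ replaces the left-hand integrand by $e^{-ix\cdot k}$. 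Uniqueness of the Fourier representation then yields
\begin{equation*}
\int f(x)\,d(VEV^{-1})(x) \;=\; \int f(\sigma \Lambda^T u)\,dE(u),
\end{equation*}
with $\sigma = +1$ in the unitary case and $\sigma = -1$ in the anti-unitary one.

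Assembling the two pieces gives
\begin{equation*}
VA^{\Theta}V^{-1} \;=\; \int d(VEV^{-1})(x)\,\alpha_{\Lambda \Theta x}(VAV^{-1}) \;=\; \int dE(u)\,\alpha_{(\sigma \Lambda \Theta \Lambda^{T})u}(VAV^{-1}) \;=\; (VAV^{-1})^{\sigma \Lambda \Theta \Lambda^{T}},
\end{equation*}
which is the claim.

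The main obstacle is not the algebraic chain above, which is purely formal, but justifying it at the operator-theoretic level: the warped convolution is defined only on the smooth domain $\mathcal{D}^{\infty}$ of $U$, as an oscillatory integral rather than as a norm-convergent Bochner integral. I would therefore need to check that $V$ maps $\mathcal{D}^{\infty}$ into itself, which follows from the intertwining relation together with invertibility of $\Lambda$: if $x \mapsto U(x)\varphi$ is smooth in norm, then so is $x \mapsto V U(x)\varphi = U(\Lambda x) V\varphi$, and composition with $\Lambda^{-1}$ returns $x \mapsto U(x) V\varphi$. With smoothness preserved in both directions, $V$ may be commuted through the regularized oscillatory integral of Definition \ref{defwca} at the level of matrix elements, and the three-step computation becomes rigorous.
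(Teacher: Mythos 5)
Your argument is correct in substance. Note first that the paper does not actually prove Proposition \ref{blsp1}: it is imported verbatim from \cite{BLS}, so there is no in-paper proof to compare against. The standard proof in \cite{BLS} works directly with the regularized oscillatory integral of Definition \ref{defwca}: one conjugates under the double integral, uses $VU(y)V^{-1}=U(\Lambda y)$ and $V\alpha_{\Theta x}(A)V^{-1}=\alpha_{\Lambda\Theta x}(VAV^{-1})$, picks up the complex conjugation of the scalar phase $e^{-ixy}\mapsto e^{+ixy}$ in the anti-unitary case (this is where $\sigma$ enters), and then performs the change of variables $x\mapsto\sigma\Lambda^{T}x'$, $y\mapsto\Lambda^{-1}y'$, whose Jacobians cancel and which turns the cutoff $\chi$ into another admissible cutoff. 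Your route through the pushforward of the spectral measure $E$ under $k\mapsto\sigma\Lambda^{T}k$ is the same computation expressed in the formal spectral-integral presentation $A^{\Theta}=\int dE(x)\,\alpha_{\Theta x}(A)$, and all three of your steps (cocycle conjugation, transformation of $E$, reassembly) check out, including the sign bookkeeping that places $\sigma$ correctly in $\sigma\Lambda\Theta\Lambda^{T}$. The one point you flag but do not fully discharge is the right one: since $\int dE(x)\,\alpha_{\Theta x}(A)$ is only shorthand for the oscillatory integral, a fully rigorous version should carry out the conjugation and the change of variables at the level of the $\epsilon$-regularized matrix elements and invoke the independence of the limit from the choice of $\chi$; together with your observation that $V$ preserves $\mathcal{D}^{\infty}$ (which is correct, via $VU(x)\varphi=U(\Lambda x)V\varphi$ and invertibility of $\Lambda$), this closes the argument. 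Your presentation has the advantage of making the geometric content (how $\Theta$ transforms) transparent; the oscillatory-integral version has the advantage of being rigorous without the intermediate formal object.
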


 By using the former proposition we are able to prove the following result.
\begin{theorem}
	Since the Coleman-Mandula operator $X_{\mu}$ transforms covariantly under transformations of the  proper orthochronous  Lorentz group (see Theorem \ref{trafo}), the deformed version, i.e.  $X_{\mu}^{\Theta}$, transforms by using the former Proposition as follows,
		\begin{equation}
	 U(0,\Lambda)\,X_{\mu}^{\Theta}\, U(0,\Lambda)^{-1}=(\Lambda^{T} 
	 X)_{\mu}^{  \Lambda \Theta \Lambda^{T}}.
		\end{equation}
		Moreover, by adding the translations to Lorentz-transformation the operator transforms as, 
		\begin{align}\label{t67}
		U(a,\Lambda) X_{\mu}^{\Theta}	U(a,\Lambda)^{-1}&= (\Lambda^{T} 
		X)_{\mu}^{  \Lambda \Theta \Lambda^{T}}+(\Lambda^{T} a-\frac{1}{M^2}(a\cdot P)\Lambda^{T}  P )_{\mu}
		\end{align}
		This implies that  the commutator relations of the deformed Snyder spacetime transform  \textbf{ covariant  under Poincar\'e-transformations}.
  
\end{theorem}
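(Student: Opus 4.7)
The plan is to factor a general Poincar\'e element as $U(a,\Lambda)=U(a,\bone)\,U(0,\Lambda)$ and apply Proposition \ref{blsp1} separately to each factor, then deduce covariance of the commutator by substitution. For the Lorentz piece I would set $V=U(0,\Lambda)$; since translations satisfy $V\,U(x)\,V^{-1}=U(\Lambda x)$, Proposition \ref{blsp1} gives
\begin{equation*}
U(0,\Lambda)\,X_{\mu}^{\Theta}\,U(0,\Lambda)^{-1}=\bigl(U(0,\Lambda)\,X_{\mu}\,U(0,\Lambda)^{-1}\bigr)^{\Lambda\Theta\Lambda^{T}}=(\Lambda^{T}X)_{\mu}^{\Lambda\Theta\Lambda^{T}},
\end{equation*}
where the last equality is Theorem \ref{trafo} with $a=0$. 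This is the first displayed equation of the theorem.

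For the translation piece I would set $V=U(a,\bone)$. Since translations mutually commute, $V\,U(x)\,V^{-1}=U(x)$, so Proposition \ref{blsp1} applies with $\Lambda=\bone$ and delivers $V\,X_{\mu}^{\Theta}\,V^{-1}=(V\,X_{\mu}\,V^{-1})^{\Theta}$. Theorem \ref{trafo} identifies the inner operator as $X_{\mu}+a_{\mu}-\frac{a^{\rho}}{M^{2}}P_{\rho}P_{\mu}$; linearity of the warped convolution, together with the fact that $a_{\mu}$ is a c-number and that $P_{\rho}$ commutes with every $U(y)$ (so $\alpha_{\Theta x}$ fixes any polynomial in $P$), yields
\begin{equation*}
U(a,\bone)\,X_{\mu}^{\Theta}\,U(a,\bone)^{-1}=X_{\mu}^{\Theta}+a_{\mu}-\frac{(a\cdot P)}{M^{2}}P_{\mu}.
\end{equation*}
Composing via $U(a,\Lambda)=U(a,\bone)U(0,\Lambda)$, applying the translation rule to each $X_{\nu}^{\Lambda\Theta\Lambda^{T}}$ inside $(\Lambda^{T}X)_{\mu}^{\Lambda\Theta\Lambda^{T}}=(\Lambda^{T})_{\mu}^{\,\,\,\,\nu}X_{\nu}^{\Lambda\Theta\Lambda^{T}}$, and pulling the constant matrix $(\Lambda^{T})_{\mu}^{\,\,\,\,\nu}$ outside, delivers Equation (\ref{t67}).

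For covariance of the commutator I would substitute (\ref{t67}) into $[X_{\mu}^{\Theta},X_{\nu}^{\Theta}]$ on the left, expand bilinearly, and compare with the transformation of the right-hand side of the deformed Snyder relation of the preceding theorem. The c-number shifts $(\Lambda^{T}a)_{\mu,\nu}$ drop out of the commutator, and the mixed commutators with the $P$-dependent correction $-(a\cdot P)(\Lambda^{T}P)_{\bullet}/M^{2}$ are computed from the already established identity $[X_{\mu}^{\Theta},P_{\rho}]=i(\eta_{\mu\rho}-P_{\mu}P_{\rho}/M^{2})$. On the right, the Poincar\'e transformation of $J_{\sigma\lambda}$ produces an additive shift linear in $aP$, the constant matrix $\Theta$ is effectively replaced by $\Lambda\Theta\Lambda^{T}$ through its two index contractions with $\Lambda^{T}$, and the $(\Theta P)P$ term transforms in the analogous tensorial way. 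The main obstacle will be precisely this final bookkeeping: one must verify that the $aP$-shift of $J$ on the right cancels exactly against the $P$-dependent corrections produced on the left by the momentum shifts in the transformed $X_{\mu}^{\Theta}$, and that the $\Theta$-dependent pieces reorganize into the expected form with $\Theta\to\Lambda\Theta\Lambda^{T}$. Both cancellations rest on the stability $[X_{\mu}^{\Theta},P_{\nu}]=[X_{\mu},P_{\nu}]$ under deformation (because $(\Theta P)_{\nu}$ is polynomial in $P$) and on the triviality of the warped convolution on pure-momentum operators.
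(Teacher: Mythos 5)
Your proposal is correct and follows essentially the same route as the paper: factor $U(a,\Lambda)=U(a,\bone)\,U(0,\Lambda)$, treat the Lorentz factor with Proposition \ref{blsp1} combined with Theorem \ref{trafo}, derive the translation law $U(a,\bone)X_{\mu}^{\Theta}U(a,\bone)^{-1}=X_{\mu}^{\Theta}+a_{\mu}-\frac{(a\cdot P)}{M^{2}}P_{\mu}$, and then substitute into the commutator using $[X_{\mu}^{\Theta},P_{\nu}]=[X_{\mu},P_{\nu}]$ to match the transformation of the right-hand side. The only cosmetic difference is that for the translation step the paper conjugates the explicit result $X_{\mu}^{\Theta}=X_{\mu}+(\Theta P)_{\mu}$ directly, whereas you invoke Proposition \ref{blsp1} with $\Lambda=\bone$ plus the triviality of the warped convolution on momentum polynomials; both give the same formula.
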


\begin{proof}
	The covariant transformation behavior of the Coleman-Mandula operator under Lorentz-transformations  is known from Theorem \ref{trafo}. In order to prove the transformational manner of the deformed operator we use the former proposition and Theorem \ref{trafo}, i.e.
	\begin{equation*}
	U(0,\Lambda)\,X_{\mu}^{\Theta}\, U(0,\Lambda)^{-1}=
	( U_{\Lambda}\,X_{\mu} \, U_{\Lambda}^{-1})^{  \Lambda \Theta \Lambda^{T}}=(\Lambda^{T} 
	X)_{\mu}^{  \Lambda \Theta \Lambda^{T}}.
	\end{equation*}
	Therefore the left hand side of the commutator relations of the Snyder spacetime transform as,
		\begin{align*}
&	[ (\Lambda^{T} 
		X)_{\mu}^{  \Lambda \Theta \Lambda^{T}}, (\Lambda^{T} 
		X)_{\nu}^{  \Lambda \Theta \Lambda^{T}}] =(\Lambda^{T})^{\,\,\,\, \kappa}_{\mu} (\Lambda^{T})^{\,\,\,\,\lambda}_{\nu} 
			[  			X_{\kappa}^{  \Lambda \Theta \Lambda^{T}}, 
			X_{\lambda}^{  \Lambda \Theta \Lambda^{T}}]
\\		& 	=(\Lambda^{T})^{\,\,\,\, \kappa}_{\mu} (\Lambda^{T})^{\,\,\,\,\lambda}_{\nu} \left( i\frac{J_{\kappa\lambda}}{M^2} -2i   ( \Lambda \Theta \Lambda^{T})_{\kappa\lambda} - \frac{2i}{M^2} \left((( \Lambda \Theta \Lambda^{T})P)_{\kappa}P_{\lambda}-(( \Lambda \Theta \Lambda^{T}) P)_{\lambda}P_{\kappa}\right) \right)\\&
=
  \left( i\frac{(\Lambda^T J \Lambda)_{\mu\nu}}{M^2} -2i   \Theta_{\mu\nu} - \frac{2i}{M^2} \left((   \Theta  (\Lambda^T P))_{\mu}(\Lambda^T P)_{\nu}-( \Theta   (\Lambda^T P))_{\nu}(\Lambda^T P)_{\mu}\right) \right),
		\end{align*}
	which is equivalent to the transformation  of the right-hand side.  Next, we apply the translations to the left-hand side  of the Snyder spacetime that is generated by the deformed Coleman-Mandula operator. Before, doing so let us study the action of the translation on the deformed Coleman-Mandula operator, 
	
	\begin{align*}
	U(a,\bone)X_{\mu}^{\Theta}	U(a,\bone)^{-1}&=
		U(a,\bone)X_{\mu}U(a,\bone)^{-1}+(\Theta P)_{\mu}\\&=
		X_{\mu}+a_{\mu} -\frac{1}{M^2}a^{\rho}P_{\rho}P_{\mu}+(\Theta P)_{\mu}
		\\&=X_{\mu}^{\Theta}+a_{\mu} -\frac{1}{M^2}a^{\rho}P_{\rho}P_{\mu},
	\end{align*}
	where in the last line we used the commutativity of the momentum operator among its components and the transformation behavior  (see Equation (\ref{et1})) of the Coleman-Mandula operator under translations.  
	Therefore the left hand side of our physically realistic Snyder spacetime transforms as  
		\begin{align*}
	 	U(a,\bone)[X_{\mu}^{\Theta},X_{\nu}^{\Theta}]	U(a,\bone)^{-1}&=
	 [	X_{\mu}^{\Theta}+a_{\mu} -\frac{1}{M^2}a^{\rho}P_{\rho}P_{\mu},X_{\nu}^{\Theta}+a_{\nu} -\frac{1}{M^2}a^{\sigma}P_{\sigma}P_{\nu}]\\&
	= [	X_{\mu}^{\Theta},X_{\nu}^{\Theta}]-\frac{1}{M^2}a^{\rho}[P_{\rho}P_{\mu},X_{\nu} ]+\mu\leftrightarrow\nu\\&
	= [	X_{\mu}^{\Theta},X_{\nu}^{\Theta}]+\frac{i}{M^2}\left(  a_{\nu}P_{\mu}-a_{\mu}P_{\nu}\right),
		\end{align*}
This is the exact form in which the right hand side of the commutator relation transforms under translations.  Since a general Poincar\'e transformation can be written as the product, 
 \begin{align*}
U(a,\bone) U(0,\Lambda)=U(a, \Lambda),
\end{align*}
	we only need to use the former results in order to prove that the Snyder spacetime, generated by the Coleman-Mandula operators, is Poincar\'e invariant. 
	 
\end{proof}  
The former proposition proves that there is no inconsistency in assuming (and in fact proving) the covariance of the coordinate operators and the constant $\Theta$ as in the case of the Moyal-Weyl spacetime. Hence, the constant non-commutativity, given by $\Theta$, looks the same in all Lorentz-frames. Let us briefly discuss and compare the former result with results obtained by using a twisted  Poincar\'e symmetry as in \cite{CH}. 
\\\\The authors in the formentioned  paper argue that while the algebraic point of view (i.e. by generating the Moyal-Weyl spacetime using operators)   highlights the violation of the Lorentz group, by using functions and the  twisted co-product, which is a deformed co-product that satisfies the twist equation (see \cite[Equation 2.4]{CH})  the noncommutative spacetimes becomes invariant  under transformations of the twist-deformed Poincar\'e group. Hence, instead of using the undeformed Poincar\'e group one takes the twist as well in the group into account in order for the functions (in this case the coordinates) to transform invariant. However, our approach uses the undeformed  Poincar\'e-group, while it changes the commutator relations between the coordinate operators.   
\section{Discussion}
In this work, we motivated from different perspectives an operator which,   takes in a relativistic context the rightful place of the coordinate operator. In particular, this self-adjoint operator is noncommutative and generates a Snyder-type spacetime that is not plagued from the problems, as the breaking of translation covariance or the soccer-ball problem,  that the Snyder spacetime usually suffers from. 
\newline\newline
In addition to representing this operator on the Fock-space by using the massive scalar field, we applied Rieffel deformations to it, in order to generate a more complex noncommutative spacetime. The deformation of the Coleman-Mandula operator is proven to exist and the result is as well a self-adjoint object.
The result of the deformation was, in addition to the Snyder-type spacetime, a  Moyal-Weyl type of noncommutative spacetime with velocity dependent terms. In the low energy limit it corresponds to the spacetime found in \cite{Muc3}.
\newline\newline
The extraordinary property of this spacetime is the covariance under Poincar\'e transformations. In the well-known noncommutative models  that are applied to physics, one usually encounters either a problem with the Lorentz-covariance or the translational covariance. For example, the Moyal-Weyl spacetime, i.e. 
$$[X_{\mu},X_{\nu}]=i\Theta_{\mu\nu}\bone$$
where $\Theta$ is a constant skew-symmetric matrix, is covariant (more specifically invariant) under global translations, i.e. $X_{\mu}\rightarrow X_{\mu}+a_{\mu}\bone$, but not covariant under Lorentz transformations. The Snyder spacetime on the other hand,  
$$[X_{\mu},X_{\nu}]=i \kappa^2 J_{\mu\nu}$$
where $\kappa$ being a physical dimension-full constant, is covariant w.r.t. to Lorentz transformations, i.e. $X_{\mu}\rightarrow(\Lambda^{T}  X)_{\mu}$, but not covariant under translations since the right-hand side transforms but the left-hand side does not. However, by choosing the physically motivated operator properly and
deforming it we united both interesting spacetimes and achieved Poincar\'e
covariance, using the transformation given in Equation (\ref{t67}).

\bibliographystyle{alpha}
\bibliography{allliterature1}

\end{document}